\newtheorem{observation}{Observation}
\newtheorem{remark}{Remark}
\newtheorem{fact}{Fact}
\newcommand{\eps}{\epsilon}
\newcommand{\abs}[1]{\left| #1 \right|}
\newcommand{\bE}{\mathbf{E}}
\renewcommand{\Pr}{\mathbf{Pr}}
\renewcommand{\G}{\mathcal{G}}
\newcommand{\Grid}{\mathbb{G}}
\newcommand{\ball}{\mathtt{Ball}}
\newcommand{\Sa}{S^{\text{acc}}}
\newcommand{\Sr}{S^{\text{rej}}}
\newcommand{\Sf}{S^{\text{rep}}}
\newcommand{\adj}{\textsc{adj}}
\newcommand{\cell}{\textsc{cell}}
\newcommand{\alg}{\mathtt{ALG}}
\newcommand{\opt}{n_{\mathtt{opt}}}
\newcommand{\sol}{n_{\mathtt{gdy}}}
\newcommand{\Gsub}{\mathcal{G}_\mathtt{sub}}
\renewcommand{\paragraph}[1]{\medskip \noindent {\bf #1.}}
\begin{document}

\copyrightyear{2018}
\acmYear{2018}
\setcopyright{acmlicensed}
\acmConference[PODS'18]{37th ACM SIGMOD-SIGACT-SIGAI
Symposium on Principles of Database Systems}{June 10--15,
2018}{Houston, TX, USA}
\acmBooktitle{PODS'18: 37th ACM SIGMOD-SIGACT-SIGAI Symposium
on Principles of Database Systems, June 10--15, 2018, Houston, TX,
USA}
\acmPrice{15.00}
\acmDOI{10.1145/3196959.3196978}
\acmISBN{978-1-4503-4706-8/18/06}

\fancyhead{}


\title{Distinct Sampling on Streaming Data with Near-Duplicates}

\titlenote{Both authors are supported by NSF CCF-1525024 and IIS-1633215.}

\author{Jiecao Chen}
\affiliation{%
  \institution{Indiana University Bloomington}
  \city{Bloomington} 
  \state{IN} 
  \postcode{47408}
  \country{USA}
}
\email{jiecchen@umail.iu.edu}

\author{Qin Zhang}
\affiliation{%
  \institution{Indiana University Bloomington}
  \city{Bloomington} 
  \state{IN} 
  \postcode{47408}
  \country{USA}
}
\email{qzhangcs@indiana.edu}

\begin{abstract}
In this paper we study how to perform distinct sampling in the streaming model where data contain near-duplicates.  The goal of distinct sampling is to return a distinct element uniformly at random from the universe of elements, given that all the near-duplicates are treated as the same element.  We also extend the result to the sliding window cases in which we are only interested in the most recent items.  We present algorithms with provable theoretical guarantees for datasets in the Euclidean space, and also verify their effectiveness via an extensive set of experiments.
\end{abstract}

\maketitle

\section{Introduction}
\label{sec:intro}

Real world datasets are always noisy;  imprecise references to same real-world entities are ubiquitous in the business and scientific databases.  For example, YouTube contains many videos of almost the same content; they appear to be slightly different due to cuts, compression and change of resolutions.  A large number of webpages on the Internet are near-duplicates of each other.  Numerous tweets and WhatsApp/WeChat messages are re-sent with small edits.  This phenomenon makes data analytics more difficult.  It is clear that direct statistical analysis on such noisy datasets will be erroneous.
For instance, if we perform standard distinct sampling, then the sampling will be biased towards those elements that have a large number of near-duplicates.

On the other hand, due to the sheer size of the data it becomes infeasible to perform a comprehensive data cleaning step before the actual analytic phase.  In this paper we study how to process datasets containing near-duplicates in the data stream model~\cite{FM85,AMS99}, where we can only make a sequential scan of data items using a small memory space before the query-answering phase.  When answering queries we need to treat all the near-duplicates as the same universe element.

This general problem has been recently proposed in \cite{CZ16}, where the authors studied the estimation of the number of distinct elements of the data stream (also called $F_0$).  In this paper we extend this line of research by studying another fundamental problem in the data stream literature: the {\em distinct sampling} (a.k.a.\ $\ell_0$-sampling), where at the time of query we need to output a random sample among all the distinct elements of the dataset.  $\ell_0$-sampling has many applications that we shall mention shortly.

We remark, as also pointed out in \cite{CZ16}, that we cannot place our hope on a {\em magic hash function} that can map all the near-duplicates into the same element and otherwise into different elements, simply because such a magic hash function, if exists, needs a lot of bits to describe.

\paragraph{The Noisy Data Model and Problems}
Let us formally define the noisy data model and the problems we shall study.  In this paper we will focus on points in the Euclidean space.  More complicated data objects such as documents and images can be mapped to points in their feature spaces. 

We first introduce a few concepts (first introduced in \cite{CZ16}) to facilitate our discussion.  Let $d(\cdot, \cdot)$ be the distance function of the Euclidean space, and let $\alpha$ be a parameter (distance threshold) representing the maximum distance between any two points in the same group.  

\begin{definition}[data sparsity]
\label{def:sparse}
We say a dataset $S$ {\em $(\alpha, \beta)$-sparse} in the Euclidean space for some $\beta \ge \alpha$  if for any $u, v \in S$ we have either $d(u, v) \le \alpha$ or $d(u, v) > \beta$.  We call $\max_\beta \beta/\alpha$ the {\em separation ratio}.
\end{definition}

\begin{definition}[well-separated dataset]
\label{def:well-separated} 
We say a dataset $S$ {\em well-separated} if the separation ratio of $S$ is larger than $2$.
\end{definition}

\begin{definition}[natural partition; $F_0$ of well-separated dataset]
We can naturally partition a well-separated dataset $S$ to a set of groups such that the intra-group distance is at most $\alpha$, and the inter-group distance is more than $2\alpha$. We call this the unique {\em natural partition} of $S$.  Define the number of distinct elements of a well-separated dataset w.r.t.\ $\alpha$, denoted as $F_0(S, \alpha)$, to be the number of groups in the natural partition.  
\end{definition}
We will assume that $\alpha$ is given as a user-chosen input to our algorithms. In practice, $\alpha$ can be obtained for example by sampling a small number of items of the dataset and then comparing their labels.  

For a general dataset, we need to define the number of distinct elements as an optimization problem as follows.
\begin{definition}[$F_0$ of general dataset]
\label{def:F0}
The number of distinct elements of $S$ given a distance threshold $\alpha$,  denoted by $F_0(S, \alpha)$, is defined to be the size of the {\em minimum} cardinality partition $\G = \{G_1, G_2, \ldots, G_n\}$ of $S$ such that for any $i = 1, \ldots, n$, and for any pair of points $u, v \in G_i$, we have $d(u, v) \le \alpha$.
\end{definition}

Note that the definition for general datasets is consistent with the one for well-separated datasets. 

We next define $\ell_0$-sampling for noisy datasets. To differentiate with the standard $\ell_0$-sampling we will call it {\em robust} $\ell_0$-sampling; but we may omit the word ``robust'' in the rest of the paper when it is clear from the context.  We start with well-separated datasets.

\begin{definition}[robust $\ell_0$-sampling on well-separated dataset]
\label{def:sampling}
Let $S$ be a well-separated dataset with natural partition $\G = \{G_1, G_2,$ $\ldots, G_n\}$.
The robust $\ell_0$-sampling on $S$ outputs a point $u \in S$ such that
\begin{equation}
\label{eq:def-1}
\forall i \in [n], \Pr[u \in G_i] = 1/n.
\end{equation} 
That is, we output a point from each group with equal probability; we call the outputted point the {\em robust $\ell_0$-sample}.
\end{definition}

It is a little more subtle to define robust $\ell_0$-sampling on general datasets, since there could be multiple minimum cardinality partitions, and without fixing a particular partition we cannot define $\ell_0$-sampling.  We will circumvent this issue by targeting a slightly weaker sampling goal.
\begin{definition}[robust $\ell_0$-sampling on general dataset]
Let $S$ be a dataset and let $n = F_0(S, \alpha)$. The robust $\ell_0$-sampling on $S$ outputs a point $q$ such that, 
\begin{equation} 
\label{eq:def-2}
\forall p \in S, \Pr[q \in \ball(p, \alpha) \cap S] = \Theta(1/n),
\end{equation}
where $\ball(p, \alpha)$ is the ball centered at $p$ with radius $\alpha$.  
\end{definition}

Let us compare Equation (\ref{eq:def-1}) and (\ref{eq:def-2}). It is easy to see that when $S$ is well-separated, letting $G(p)$ denote the group that $p$ belongs to in the natural partition of $S$, we have
\begin{equation*}
G(p) = \ball(p, \alpha) \cap S,
\end{equation*}
and thus we can rewrite (\ref{eq:def-1}) as
\begin{equation} 
\label{eq:def-3}
\forall p \in S, \Pr[q \in \ball(p, \alpha) \cap S] = 1/n.
\end{equation}
Comparing (\ref{eq:def-2}) and (\ref{eq:def-3}), one can see that the definition of robust $\ell_0$-sampling on general dataset is consistent with that on well-separated dataset, except that we have relaxed the sample probability by a constant factor.

\paragraph{Computational Models} We study robust $\ell_0$-sampling in the standard streaming model, where the points $p_1, \ldots, p_m \in S$ comes one by one in order, and we need maintain a sketch of $S_t = \{p_1, \ldots, p_t\}$ (denoted by $sk(S_t)$)  such that at any time $t$ we can output an $\ell_0$-sample of $S_t$ using $sk(S_t)$. The goal is to minimize the size of sketch $sk(S_t)$ (or, the memory space usage) and the processing time per point under certain accuracy/approximation guarantees.

We also study the {\em sliding window} models. Let $w$ be the window size. In the {\em sequence-based} sliding window model, at any time step $t$ we should be able to output an $\ell_0$-sample of $\{p_{\ell-w+1}, \ldots, p_\ell\}$ where $p_\ell$ is the latest point that we receive by the time $t$.  In the {\em time-based} sliding window model, we should be able to output an $\ell_0$-sample of $\{p_{\ell'}, \ldots, p_\ell\}$ where $p_{\ell'}, \ldots, p_\ell$ are points received in the last $w$ time steps $t -w+1, \ldots, t$.  The sliding window models are generalizations of the standard streaming model (which we call the {\em infinite window} model), and are very useful in the case that we are only interested in the most recent items.   Our algorithms for sliding windows will work for both sequence-based and time-based cases.  The only difference is that the definitions of the expiration of a point are different in the two cases.

\paragraph{Our Contributions}
This paper makes the following theoretical contributions.
\begin{enumerate}
\item We propose a robust $\ell_0$-sampling algorithm for well-separated datasets in the streaming model in constant dimensional Euclidean spaces; the algorithm uses $O(\log m)$ words of space ($m$ is the length of the stream) and $O(\log m)$ processing time per point, and successes with probability $(1 - 1/m)$ during the whole streaming process.  This result matches the one in the corresponding noiseless data setting.   See Section~\ref{sec:IW}

\item We next design an algorithm for sliding windows under the same setting. The algorithm works for both sequence-based and time-based sliding windows, using $O(\log n \log w)$ words of space and $O(\log n\log w)$ processing time per point with success probability $(1 - 1/m)$ during the whole streaming process.  We comment that the sliding window algorithm is much more complicated than the one for the infinite window, and is our main technical contribution.
See Section~\ref{sec:SW}.

\item For general datasets, we manage to show that the proposed $\ell_0$-sampling algorithms for well-separated datasets still produce almost uniform samples on general datasets. More precisely, it achieves the guarantee (\ref{eq:def-2}). See Section~\ref{sec:general}. 

\item We further show that our algorithms can also handle datasets in high dimensional Euclidean spaces given sufficiently large separation ratios. See Section~\ref{sec:highD}.

\item Finally, we show that our $\ell_0$-sampling algorithms can be used to efficiently estimate $F_0$ in both the standard streaming model and the sliding window models. See Section~\ref{sec:F0}.
\end{enumerate}
We have also implemented and tested our $\ell_0$-sampling algorithm for the infinite window case, and verified its effectiveness on various datasets. See Section~\ref{sec:exp}.

\paragraph{Related Work}
We now briefly survey related works on distinct sampling, and previous work dealing with datasets with near-duplicates.

The problem of $\ell_0$-sampling is among the most well studied problems in the data stream literature. It was first investigated in \cite{GT01,CMR05,FIS08}, and the current best result is due to Jowhari et al.~\cite{JST11}.  We refer readers to \cite{CF14} for an overview of a number of $\ell_0$-samplers under a unified framework. Besides being used in various statistical estimations~\cite{CMR05}, $\ell_0$-sampling finds applications in dynamic geometric problems (e.g., $\eps$-approximation, minimum spanning tree~\cite{FIS08}), and dynamic graph streaming algorithms (e.g., connectivity~\cite{AGM12a}, graph sparsifiers~\cite{AGM12b,AGM13}, vertex cover~\cite{CCHM15,CCEHMMV16}  maximum matching \cite{AGM12a,Konrad15,AKLY16,CCEHMMV16}, etc; see \cite{McGregor14} for a survey).  
However, all the algorithms for $\ell_0$-sampling proposed in the literature only work for noiseless streaming datasets.

$\ell_0$-sampling in the sliding windows on noiseless datasets can be done by running the algorithm in \cite{BDM02} with the rank of each item being generated by a random hash function.  As before, this approach cannot work for datasets with near-duplicates simply because the hash values assigned to near-duplicates will be different.  

$\ell_0$-sampling has also been studied in the distributed streaming setting~\cite{CT15} where there are multiple streams and we want to maintain a distinct sample over the union of the streams. The sampling algorithm in \cite{CT15} is essentially an extension of the random sampling algorithms in \cite{CMYZ12,TW11} by using a hash function to generate random ranks for items, and is thus again unsuitable for datasets with near-duplicates.

The list of works for $F_0$ estimation is even longer (e.g., \cite{FM85,BJKST02,DF03,Ganguly07,FFG+08,KNW10}; just mention a few).  Estimating $F_0$ in the sliding window model was studied in \cite{ZZCL10}. Again, all these works target noiseless data.

The general problem of processing noisy data streams {\em without} a comprehensive data cleaning step was only studied fairly recently \cite{CZ16} for the $F_0$ problem.  A number of statistical problems ($F_0$, $\ell_0$-sampling, heavy hitters, etc.) were studied in the distributed model under the same noisy data model \cite{Zhang15}.  Unfortunately the multi-round algorithms designed in the distributed model cannot be used in the data stream model because on data streams we can only scan the whole dataset once without looking back.

This line of research is closely related  to {\em entity resolution} (also called {\em data deduplication}, {\em record linkage}, etc.); see, e.g., \cite{KSS06,EIV07,HSW07,DN09}. However, all these works target finding and merging all the near-duplicates, and thus cannot be applied to the data stream model where we only have a small memory space and cannot store all the items.

\paragraph{Techniques Overview}
The high level idea of our algorithm for the infinite window is very simple. Suppose we can modify the stream by only keeping one representative point (e.g., the first point according to the order of the data stream) of each group, then we can just perform a uniform random sampling on the representative points, which can be done for example by the following folklore algorithm: We assign each point with a random rank in $(0, 1)$, and maintain the point with the minimum rank as the sample during the streaming process.  Now the question becomes: 
\begin{quote}
Can we identify (not necessarily store) the first point of each group space-efficiently?
\end{quote}

Unfortunately, we will need to use $\Omega(n)$ space ($n$ is the number of groups) to identify the first point of each group for a noisy streaming dataset, since we have to store at least $1$ bit to ``record'' the first point of each group to avoid selecting other later-coming points of the same group.   One way to deal with this challenge is to subsample a set of groups {\em in advance}, and then only focus on the first points of this set of groups.  Two issues remain to be dealt with: 
\begin{enumerate}
\item How to sample a set of groups in advance? 
\item How to determine the sample rate?  
\end{enumerate}
Note that before we have seen all points in the group, the group itself is {\em not} well-defined, and thus it is difficult to assign an ID to a group at the beginning and perform the subsampling. Moreover, the number of groups will keep increasing as we see more points,  we therefore have to decrease the sample rate along the way to keep the small space usage. 

For the first question, the idea is to post a random grid of side length $\Theta(\alpha)$ ($\alpha$ is the group distance threshold) upon the point set, and then sample cells of the grid instead of groups using a hash function.  We then say a group 
\begin{enumerate}
\item
$G$ is {\em sampled} if and only if $G$'s {\em first} point falls into a sampled cell, 

\item $G$ is {\em rejected} if $G$ has a point in a sampled cell, however the $G$'s first point is not in a sampled cell.  

\item $G$ is {\em ignored} if $G$ has no point in a sampled cell.
\end{enumerate}
We note that the second item is critical since we want to judge a group  only by its first point; even there is another point in the group that is sampled, if it is not the first point of the group, then we will still consider the group as rejected.  On the other hand, we do not need to worry about those ignored groups since they are not considered at the very beginning.

To guarantee that our decision is consistent on each group we have to keep some {\em neighborhood} information on each rejected group as well to avoid ``double-counting'', which seems to be space-expensive at the first glance.  Fortunately, for constant dimensional Euclidean space, we can show that if grid cells are randomly sampled, then the number of non-sampled groups is within a constant factor of that of sampled groups.  
We thus can control the space usage of the algorithm by dynamically decreasing the sample rate for grid cells.  More precisely, we try to maintain a sample rate as low as possible while guarantee that there is at least one group that is sampled. This answers the second question.

The situation in the sliding window case becomes complicated because points will expire, and consequently we cannot keep decreasing the grid cell sample rate.  In fact, we have to increase the cell sample rate when there are not enough groups being sampled.  However, if we increase the cell sample rate in the middle of the process, then the neighborhood information of those previously ignored groups has already got lost.  To handle this dilemma we choose to maintain a hierarchical sampling structure. We choose to describe the high level ideas as well as the actual algorithm in Section~\ref{sec:alg-SW} after the some basic algorithms and concepts have been introduced.

For general datasets, we show that our algorithms for well-separated datasets can still return an almost uniform random distinct sample. We first relate our robust $\ell_0$-sampling algorithm to a greedy partition process, and show that our algorithm will return a random group among the groups generated by that greedy partition.  We then compare that particular greedy partition with the minimum cardinality partition, and show that the number of groups produced by the two partitions are within a constant factor of each other.   
\smallskip

{\em Comparison with \cite{CZ16}.}  We note that although this work follows the noisy data model of that in \cite{CZ16} and the roadmap of this paper is similar to that of \cite{CZ16} (which we think is the natural way for the presentation), the contents of this paper, namely, the ideas, proposed algorithms, and analysis techniques, are all very different from that in \cite{CZ16}. After all, the $\ell_0$-sampling problem studied in this paper is different from the $F_0$ estimation studied in \cite{CZ16}.  We note, however, that there are natural connections between distinct elements and distinct sampling, and thus would like to mention a few points.
\begin{enumerate}
\item  In the infinite window case, we can easily use our robust $\ell_0$-sampling algorithm to get an algorithm for $(1+\eps)$-approximating robust $F_0$ using the same amount of space as that in \cite{CZ16} (see Section~\ref{sec:F0}).  We note that in the noiseless data setting, the problem of $\ell_0$-sampling and $F_0$ estimation can be reduced to each other by easy reductions.  However, it is not clear how to straightforwardly use $F_0$ estimation to perform $\ell_0$-sampling in the noisy data setting using the same amount of space as we have achieved.  We believe that since there is no magic hash function, similar procedure like finding the representative point of each group is necessary in any $\ell_0$-sampling algorithm in the noisy data setting.

\item Our sliding window $\ell_0$-sampling algorithm can also be used to obtain a sliding window algorithm for $(1+\eps)$-approximating $F_0$ (also see Section~\ref{sec:F0}). However, it is not clear how to extend the $F_0$ algorithm in \cite{CZ16} to the sliding window case, which was not studied in \cite{CZ16}.

\item In order to deal with general datasets, in \cite{CZ16} the authors introduced a concept called $F_0$-ambiguity and used it as a parameter in the analysis.  Intuitively, $F_0$-ambiguity measures the least fraction of points that we need to remove in order to make the dataset to be well-separated.  This definition works for problems whose answer is a single number, which does {\em not} depend on the actual group partition.  However, different group partitions do affect the result of $\ell_0$-sampling, even that all those partitions have the minimum cardinality.
In Section~\ref{sec:general} we show that by introducing a relaxed version of random sampling we can bypass the issue of data ambiguity.
\end{enumerate}

\paragraph{Preliminaries}
In Table~\ref{tab:notation} we summarize the main notations used in this paper.  We use $[n]$ to denote $\{1, 2, \ldots, n\}$.

We say $x$ is $(1+\eps)$-approximation of $y$ if $x \in [(1-\eps)y, (1+\eps)y]$.

\begin{table}[t]
 \centering
 \begin{tabular}{| l | l |}
	\hline
        Notation       &  Definition\\
   \hline
 	$S$  & stream of points \\
    \hline
   $m$   & length of the stream      \\
    \hline
   $w$   & length of the sliding window      \\
	\hline
   $n = F_0(S)$ & number of groups  \\
   \hline
   $\mathcal{G} / G$ & set of groups / a group\\
   \hline
   $G(p)$ & group containing point $p$\\
   \hline
    $\alpha$ & threshold of group diameter\\
   \hline
   $\mathbb{G} / C$ & grid / a grid cell  \\
   \hline
   $\text{\cell}(p)$ & cell containing point $p$ \\
   \hline
    $\text{\adj}(p)$ & set of cells adjacent to \cell$(p)$\\  
   \hline
   $\mathtt{Ball}(p, \alpha)$ & $\{q\ |\ d(p, q) \le \alpha\}$ \\
   \hline
    $\eps$ & approximation ratio for $F_0$ \\  
   \hline
 \end{tabular}
\caption{Notations}
 \label{tab:notation}
\end{table}

We need the following versions of the Chernoff bound.

\begin{lemma}[Standard Chernoff Bound]
\label{lem:Chernoff}
Let $X_1, \ldots, X_n$ be independent Bernoulli random variables such that $\Pr[X_i = 1] = p_i$. Let $X = \sum_{i \in [n]} X_i$. Let $\mu = \bE[X]$. It holds that $\Pr[X \ge (1+\delta)\mu] \le e^{-\delta^2\mu/3}$ and $\Pr[X \le (1-\delta)\mu] \le e^{-\delta^2\mu/2}$ for any $\delta \in (0,1)$.
\end{lemma}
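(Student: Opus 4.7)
The plan is to follow the classical moment generating function (Chernoff/Bernstein) argument. The starting point is Markov's inequality applied to an exponentiated version of $X$: for any $t > 0$,
\[
\Pr[X \ge (1+\delta)\mu] \;=\; \Pr[e^{tX} \ge e^{t(1+\delta)\mu}] \;\le\; \frac{\bE[e^{tX}]}{e^{t(1+\delta)\mu}}.
\]
Since the $X_i$ are independent, $\bE[e^{tX}] = \prod_{i \in [n]} \bE[e^{tX_i}]$, and for each Bernoulli factor I would use $\bE[e^{tX_i}] = 1 + p_i(e^t - 1) \le \exp(p_i(e^t - 1))$ (valid because $1 + x \le e^x$). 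Multiplying these out collapses the product to $\exp(\mu(e^t - 1))$, independent of the individual $p_i$'s, so the ``worst case'' distribution becomes a single parameter issue.

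For the upper tail I would optimize $t$ by choosing $t = \ln(1+\delta)$, which gives
\[
\Pr[X \ge (1+\delta)\mu] \;\le\; \left(\frac{e^{\delta}}{(1+\delta)^{1+\delta}}\right)^{\mu}.
\]
The remaining step is to show that the base of the exponent is at most $e^{-\delta^2/3}$ for $\delta \in (0,1)$, which reduces to the scalar inequality $(1+\delta)\ln(1+\delta) \ge \delta + \delta^2/3$ on $(0,1)$; I would verify this by comparing Taylor expansions (both sides agree to second order, and a derivative comparison handles the remainder). Raising to the power $\mu$ then yields the claimed $e^{-\delta^2\mu/3}$.

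For the lower tail the same recipe works with $t < 0$: writing $s = -t > 0$ and applying Markov to $e^{-sX}$, the MGF bound gives $\bE[e^{-sX}] \le \exp(\mu(e^{-s} - 1))$, and optimizing with $s = -\ln(1-\delta)$ produces
\[
\Pr[X \le (1-\delta)\mu] \;\le\; \left(\frac{e^{-\delta}}{(1-\delta)^{1-\delta}}\right)^{\mu}.
\]
Here I would reduce to the inequality $(1-\delta)\ln(1-\delta) \ge -\delta + \delta^2/2$ on $(0,1)$, again by Taylor/derivative comparison, yielding $e^{-\delta^2\mu/2}$ as required.

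The only real obstacle is the two scalar inequalities at the end; everything else is a mechanical application of the MGF method. These inequalities are standard and can be dispatched either by series expansion (the right-hand side retains the leading terms of the left) or by showing that the difference function vanishes at $\delta = 0$ and has a non-negative derivative on $(0,1)$. Since the statement is a textbook Chernoff bound, I would most likely just cite it rather than reproduce the derivation in the paper.
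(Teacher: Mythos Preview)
Your proposal is correct: it is the standard MGF/Markov argument, and your final remark is exactly what the paper does---Lemma~\ref{lem:Chernoff} is stated in the preliminaries without any proof or citation beyond its name, so there is nothing in the paper to compare against beyond the bare statement.
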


\begin{lemma}[Variant of Chernoff bound]
 \label{lem:Hoeffding}
Let $Y_1, \ldots, Y_n$ be $n$ independent random variables
such that $Y_i \in [0,T]$ for some $T > 0$. Let $\mu = \bE[\sum_i
  Y_i]$. Then for any $a > 0$, we have
$$\Pr\left[\sum_{i\in[n]} Y_i > a\right] \le e^{-(a-2\mu)/T}.$$
\end{lemma}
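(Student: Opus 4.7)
The plan is to follow the standard Chernoff / exponential-moment approach, optimized only loosely so as to produce the clean constant $2$ appearing in the statement. Writing $S = \sum_{i=1}^n Y_i$, for any parameter $t > 0$ Markov's inequality applied to $e^{tS}$ gives
\[
\Pr[S > a] \;\le\; e^{-ta}\,\bE[e^{tS}] \;=\; e^{-ta} \prod_{i=1}^n \bE[e^{tY_i}],
\]
where the last equality uses independence of the $Y_i$'s. So the task reduces to controlling each moment generating function $\bE[e^{tY_i}]$ under the single structural hypothesis that $Y_i \in [0,T]$.

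For this per-variable MGF bound I would use convexity of $x\mapsto e^{tx}$ on the interval $[0,T]$: since every $Y_i$ lies in $[0,T]$,
\[
e^{tY_i} \;\le\; \left(1 - \frac{Y_i}{T}\right)\cdot 1 + \frac{Y_i}{T}\cdot e^{tT} \;=\; 1 + \frac{Y_i}{T}\bigl(e^{tT}-1\bigr).
\]
Taking expectations and then applying the elementary inequality $1+z \le e^z$ yields $\bE[e^{tY_i}] \le \exp\bigl(\mu_i(e^{tT}-1)/T\bigr)$, where $\mu_i = \bE[Y_i]$. Multiplying over $i$ and using $\sum_i \mu_i = \mu$ gives $\prod_i \bE[e^{tY_i}] \le \exp\bigl(\mu(e^{tT}-1)/T\bigr)$, and therefore
\[
\Pr[S>a] \;\le\; \exp\!\left(\frac{\mu(e^{tT}-1)}{T} - ta\right).
\]

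The last step is the choice of $t$. Instead of optimizing, I would simply set $t = 1/T$, which produces $e^{tT}-1 = e-1$ and hence the bound $\exp\bigl(((e-1)\mu - a)/T\bigr)$. Since $e-1 < 2$, we have $(e-1)\mu \le 2\mu$, so the exponent is at most $-(a-2\mu)/T$, yielding the claimed $e^{-(a-2\mu)/T}$. There is really no main obstacle here: this is a textbook computation, and the only ``design choice'' is to take $t=1/T$ rather than the $t$ that would minimize the right-hand side; we make this suboptimal choice purely to obtain the clean constant $2$ that appears in the lemma's statement and is convenient for later use.
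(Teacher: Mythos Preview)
Your proof is correct: the standard exponential-moment method with the convexity bound on $[0,T]$ and the choice $t=1/T$ gives exactly the stated inequality. The paper itself does not prove this lemma; it is stated in the preliminaries as a known variant of the Chernoff bound and used as a black box, so there is no ``paper's own proof'' to compare against. Your derivation is the textbook argument one would expect behind such a statement.
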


\section{Well-Separated Datasets in Constant Dimensions}
\label{sec:2D}

We start with the discussion of $\ell_0$-sampling on well-separated datasets in constant dimensional Euclidean space.

\subsection{Infinite Window}
\label{sec:IW}

We first consider the infinite window case. We present our algorithm for $2$-dimensional Euclidean space, but it can be trivially extended to $O(1)$-dimensions by appropriately changing the constant parameters.


Let $\G = \{G_1, \ldots, G_n\}$ be the natural group partition of the well-separated stream of points $S$.  
We post a random grid $\Grid$ with side length $\frac{\alpha}{2}$ on $\mathbb{R}^2$, and call each grid square a {\em cell}.  For a point $p$, define $\cell(p)$ to be the cell $C \in \Grid$ containing $p$. Let 
$$\adj(p) = \{C \in \Grid\ |\ d(p, C) \le \alpha\},$$ 
where $d(p, C)$ is defined to be the minimum distance between $p$ and a point in $C$. We say a group $G$ {\em intersects} a cell $C$ if $G \cap C \neq \emptyset$.

Assuming that all points have $x$ and $y$ coordinates in the range $[0, M]$ for a large enough value $M$. Let $\Delta = \frac{2M}{\alpha}+1$. We assign the cell on the $i$-th row and the $j$-th column of the grid $\mathbb{G} \cap [0,M] \times [0,M]$ a numerical identification (ID) $((i-1) \cdot \Delta + j)$.  For convenience we will use ``cell'' and its ID interchangeably throughout the paper when there is no confusion.

For ease of presentation, we will assume that we can use fully random hash functions for free.  In fact, by Chernoff-Hoeffding bounds for limited independence \cite{SSS95,DP09}, all our analysis still holds when we adopt $\Theta(\log m)$-wise independent hash functions, using which will not affect the asymptotic space and time costs of our algorithms.  

Let $h: [\Delta^2] \to  \{0, 1, \ldots, 2^{\lceil 2 \log\Delta \rceil}-1\}$ be a fully random hash function,
and define $h_R$ for a given parameter $R = 2^k\ (k \in \mathbb{N})$ to be $h_R(x) = h(x) \bmod R$.  We will use  $h_R$ to perform sampling.  In particular, given a set of IDs $Y = \{y_1, \ldots, y_t\}$, we call $\{y \in Y\ |\ h_R(y) = 0\}$ the set of sampled IDs of $Y$ by $h_R$.  We also call $1/R$ the {\em sample rate} of $h_R$. 

\smallskip

As discussed in the techniques overview in the introduction,  our main idea is to sample {\em cells} instead of groups in advance using a hash function. 

\begin{definition}[sampled cell]
A cell $C$ is {\em sampled} by $h_R$ if and only if $h_R(C) = 0$.
\end{definition}

By our choices of the grid cell size and the hash function we have:

\begin{fact} (a) Each cell will intersect at most one group, and each group will intersect at most $O(1)$ cells. 

(b) For any set of points $P = \{p_1, \ldots, p_t\}$, 
$$\{p \in P\ |\ h_{2R}(\cell(p)) = 0\} \subseteq \{p \in P\ |\ h_{R}(\cell(p)) = 0\}.$$
\end{fact}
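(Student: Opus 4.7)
My plan is to attack the two parts separately since they rely on very different considerations: part (a) is a purely geometric statement about how the natural partition interacts with the grid, while part (b) is a one-line number-theoretic observation about the hash family.

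For part (a), the first claim is that no cell $C$ can contain points from two distinct groups $G_i \neq G_j$. I would argue by contradiction: the diameter of a grid cell of side length $\alpha/2$ in $\mathbb{R}^2$ is $\alpha\sqrt{2}/2 < \alpha$. If $u \in G_i \cap C$ and $v \in G_j \cap C$ with $i \neq j$, then $d(u,v) \le \alpha\sqrt{2}/2 < 2\alpha$. But by the well-separated assumption and Definition~\ref{def:well-separated}, the inter-group distance exceeds $2\alpha$, a contradiction. For the second claim, I would note that any group $G$ has diameter at most $\alpha$ by Definition~\ref{def:F0}, hence is contained in an axis-aligned bounding box of side length $\alpha$. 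A box of side length $\alpha$ overlaps at most $\lceil \alpha/(\alpha/2) \rceil + 1 = 3$ cells along each axis, so $G$ intersects at most $3^2 = 9 = O(1)$ cells in the $2$-dimensional case, and more generally at most $3^d = O(1)$ cells in constant dimension $d$.

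For part (b), I would simply unwind the definitions. Suppose $p \in P$ lies in the left-hand set, i.e., $h_{2R}(\cell(p)) = h(\cell(p)) \bmod 2R = 0$. Since $R$ divides $2R$ (recall $R = 2^k$ so $2R = 2^{k+1}$), any integer that is a multiple of $2R$ is also a multiple of $R$, and therefore $h(\cell(p)) \bmod R = 0$, i.e., $h_R(\cell(p)) = 0$. Hence $p$ lies in the right-hand set, proving the inclusion.

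I do not anticipate a real obstacle here: the fact is essentially a sanity check recording the geometric separation properties that justify the grid choice together with the nestedness of the sampled-cell families across sample rates. The one subtle point worth stating explicitly is that the geometric argument in (a) uses the full strength of the well-separated hypothesis (inter-group distance $>2\alpha$) rather than the weaker $(\alpha,\beta)$-sparse assumption with $\beta<2\alpha$, and that the constant hidden in $O(1)$ for the second claim in (a) scales as $3^d$ in $d$ dimensions, which is what later lets us extend the result to $O(1)$-dimensional Euclidean spaces.
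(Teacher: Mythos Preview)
Your argument is correct, and in fact the paper does not supply a proof at all: the statement is presented as a self-evident \emph{Fact} immediately after the grid and hash definitions, with only the one-line remark ``By our choices of the grid cell size and the hash function we have\ldots'' preceding it. Your write-up provides exactly the natural justification the paper leaves implicit---the diameter bound on a side-$\alpha/2$ cell combined with the $>2\alpha$ inter-group separation for (a), and the divisibility $R \mid 2R$ for (b)---so there is nothing to add or correct.
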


In the infinite window case (this section) we choose the {\em representative} point of each group to be the {\em first} point of the group.  We note that the representative points are fully determined by the input stream, and are independent of the hash function.  We will define the representative point slightly differently in the sliding window case (next section).

We define a few sets which we will maintain in our algorithms. 
\begin{definition}
\label{def:set}
Let $\Sf$ be the set of representative points of all groups.
Define the {\em accept set} to be
$$\Sa  = \{p \in \Sf ~|~ h_R(\cell(p)) = 0\},$$
and the {\em reject set} to be
$$\Sr  = \{p \in \Sf \backslash \Sa ~|~ \exists C \in \adj(p)~ \text{s.t. } h_R(C) = 0\}.$$
\end{definition}

For convenience we also introduce the following concepts.
\begin{definition}[sampled, rejected, candidate group]
\label{def:group}
We say a group $G$ a {\em sampled} group if $G \cap \Sa \neq \emptyset$, a {\em rejected} group if $G \cap \Sr \neq \emptyset$, and a {\em candidate} group if $G \cap (\Sa \cup \Sr) \neq \emptyset$.
\end{definition}

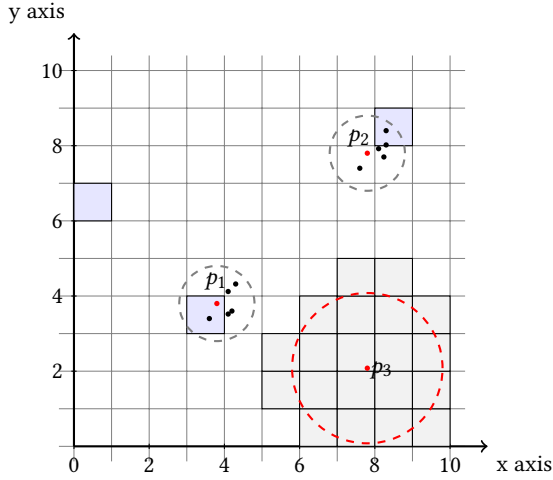
\begin{figure}
  \centering
  \begin{tikzpicture}
  \draw[step=0.5cm,gray,very thin] (-0.2, -0.2) grid (5.2,5.2);
  \draw[thick,->] (0,0) -- (5.5,0) node[anchor=north west] {x axis};
  \draw[thick,->] (0,0) -- (0,5.5) node[anchor=south east] {y axis};
  \foreach \x in {0,2,4,6, 8, 10}
  \draw (\x * 0.5 cm,1pt) -- (\x * 0.5 cm,-1pt) node[anchor=north] {$\x$};
  \foreach \y in {2,4,6, 8, 10}
  \draw (1pt,\y * 0.5 cm) -- (-1pt,\y * 0.5 cm) node[anchor=east] {$\y$};

  \filldraw[fill=gray!10!white, draw=black] (2.5,0.5) rectangle (3,1);
  \filldraw[fill=gray!10!white, draw=black] (2.5,1) rectangle (3,1.5);

  \filldraw[fill=gray!10!white, draw=black] (3,0) rectangle (3.5,0.5);
  \filldraw[fill=gray!10!white, draw=black] (3,0.5) rectangle (3.5,1);
  \filldraw[fill=gray!10!white, draw=black] (3,1) rectangle (3.5,1.5);
  \filldraw[fill=gray!10!white, draw=black] (3,1.5) rectangle (3.5,2);

  \filldraw[fill=gray!10!white, draw=black] (3.5,0) rectangle (4,0.5);
  \filldraw[fill=gray!10!white, draw=black] (3.5,0.5) rectangle (4,1);
  \filldraw[fill=gray!10!white, draw=black] (3.5,1) rectangle (4,1.5);
  \filldraw[fill=gray!10!white, draw=black] (3.5,1.5) rectangle (4,2);
  \filldraw[fill=gray!10!white, draw=black] (3.5,2) rectangle (4,2.5);

  \filldraw[fill=gray!10!white, draw=black] (4,0) rectangle (4.5,0.5);
  \filldraw[fill=gray!10!white, draw=black] (4,0.5) rectangle (4.5,1);
  \filldraw[fill=gray!10!white, draw=black] (4,1) rectangle (4.5,1.5);
  \filldraw[fill=gray!10!white, draw=black] (4,1.5) rectangle (4.5,2);
  \filldraw[fill=gray!10!white, draw=black] (4,2) rectangle (4.5,2.5);

  \filldraw[fill=gray!10!white, draw=black] (4.5,0) rectangle (5,0.5);
  \filldraw[fill=gray!10!white, draw=black] (4.5,0.5) rectangle (5,1);
  \filldraw[fill=gray!10!white, draw=black] (4.5,1) rectangle (5,1.5);
  \filldraw[fill=gray!10!white, draw=black] (4.5,1.5) rectangle (5,2);

  \draw (4.1, 1.02) node{$p_3$};
  \fill[color=red] (3.9, 1.04) circle(1pt);
  \draw[red,thick,dashed] (3.9,1.04) circle (1);

  \filldraw[fill=blue!10!white, draw=black] (1.5,1.5) rectangle (2,2);
  \draw (1.9, 2.2) node{$p_1$};
  \fill[color=red] (1.9, 1.9) circle(1pt);
  \draw[gray,thick,dashed] (1.9,1.9) circle (0.5);
  \fill (1.8, 1.7) circle(1pt);
  \fill (2.1, 1.8) circle(1pt);
  \fill (2.05, 1.76) circle(1pt);
  \fill (2.05, 2.06) circle(1pt);
  \fill (2.15, 2.16) circle(1pt);

  \filldraw[fill=blue!10!white, draw=black] (4,4) rectangle (4.5,4.5);
  \draw (3.8, 4.1) node{$p_2$};
  \fill[color=red] (3.9, 3.9) circle(1pt);
  \draw[gray,thick,dashed] (3.9,3.9) circle (0.5);
  \fill (3.8, 3.7) circle(1pt);
  \fill (4.12, 3.85) circle(1pt);
  \fill (4.05, 3.96) circle(1pt);
  \fill (4.15, 4.01) circle(1pt);
  \fill (4.15, 4.2) circle(1pt);

  \filldraw[fill=blue!10!white, draw=black] (0,3) rectangle (0.5,3.5);

\end{tikzpicture}

\caption{
  Each square is a cell;
  each light blue square is a sampled cell.  Each gray dash circle stands for a group.
  Red points ($p_1, p_2$ and $p_3$) are representative points;
  $p_1$ is in the accept set and $p_2$ is in the reject set.
  Gray cells form $\adj(p_3)$.  $\alpha = 2$ in this illustration.}
  \label{fig:cell}
\end{figure}

Figure~\ref{fig:cell} illustrates some of the concepts we have introduced so far.

Obviously, the set of sampled groups and the set of rejected groups are disjoint, and their union is the set of candidate groups.
Also note that $\Sa$ is the set of representative points of the sampled groups, and $\Sr$ is the set of representative points of rejected groups.  

We comment that it is important to keep the set $\Sr$, even that at the end we will only sample a point from $\Sa$. This is because otherwise we will have no information regarding the first points of those groups that may have points other than the first ones falling into a sampled cell, and consequently points in $S \backslash \Sf$ may also be included into $\Sa$, which will make the final sampling to be non-uniform among the groups.
One may wonder whether this additional storage will cost too much space.  Fortunately, since each group has diameter at most $\alpha$, we only need to monitor groups that are at a distance of at most $\alpha$ away from sampled cells, whose cardinality can be shown to be small. More precisely, for a group $G$, letting $p$ be its representative point, we monitor $G$ if and only if there exists a sampled cell $C$ such that $C \in \adj(p)$.  The set of representative points of such groups is precisely $\Sa \cup \Sr$.

\begin{algorithm}[t]
  \DontPrintSemicolon
  $R \gets 1$; $\Sa \gets \emptyset$; $\Sr \gets \emptyset$\;
  $\kappa_0$ is chosen to be a large enough constant \;
  \tcc{dataset is fed as a stream of points}
  \For{each arriving point $p$}{\label{line:a-1}
  	\tcc{if $p$ is not the first point of a candidate group, skip it}
    \If{$\exists u \in \Sa \cup \Sr$ s.t. $d(u, p) \le \alpha$ \label{line:a-2}}{
      continue\;
    }
    \tcc{if $p$ is the first point of a candidate group}
    \If{$h_R(\cell(p)) = 0$ \label{line:a-3}}{
      $\Sa \gets \Sa \cup \{p\}$\;     
    }\ElseIf{$\exists C \in \adj(p)$ s.t. $h_R(\cell(C)) = 0$ \label{line:a-4}}{
      $\Sr \gets \Sr \cup \{p\}$\;
    }
    \If{$|\Sa| > \kappa_0 \log m$\label{line:a-5} }{
      $R \gets 2 R$\;
      update $\Sa$ and $\Sr$ according to the updated hash function $h_R$\;
    }
  }
\tcc{at the time of query:}
\Return a random point in $\Sa$\;
\caption{{\sc Robust} $\ell_0$-{\sc Sampling-IW}}
\label{alg:IW}
\end{algorithm}

Our algorithm for $\ell_0$-sampling in the infinite window case is presented in Algorithm~\ref{alg:IW}.  We control the sample rate by doubling the range $R$ of the hash function when the number of points of $\Sa$ exceeds a threshold $\Theta(\log m)$ (Line \ref{line:a-5} of Algorithm~\ref{alg:IW}). We will also update $\Sa$ and $\Sr$ accordingly to maintain Definition~\ref{def:set}.  

When a new point $p$ comes, if $\cell(p)$ is sampled and $p$ is the first point in $G(p)$ (Line \ref{line:a-3}), we add $p$ to $\Sa$; that is, we make $p$ as the representative point of the sampled group $G(p)$.  Otherwise if $\cell(p)$ is not sampled but there is at least one sampled cell in $\adj(p)$, and $p$ is the first point in $G(p)$ (Line \ref{line:a-4}), then we add $p$ to $\Sr$; that is, we make $p$ as the representative point of the rejected group $G(p)$.  

On the other hand, if there is at least one sampled cell in $\adj(p)$ (i.e., $G(p)$ is a candidate group) and $p$ is {\em not} the first point (Line \ref{line:a-2}), then we simply discard $p$.  Note that we can test this since we have already stored the representation points of all candidate groups.
In the remaining case in which $G(p)$ is not a candidate group, we also discard $p$. 

At the time of query, we return a random point in $\Sa$.

\paragraph{Correctness and Complexity}
We show the following theorem.
\begin{theorem}
\label{thm:IW}
In constant dimensional Euclidean space for a well-separated dataset, there exists a streaming algorithm (Algorithm~\ref{alg:IW}) that with probability $1 - 1/m$, at any time step, it outputs a robust $\ell_0$-sample. The algorithm uses $O(\log m)$ words of space and $O(\log m)$  processing time per point.
\end{theorem}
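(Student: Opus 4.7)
My approach separates three concerns: (A) a structural loop invariant that pins down exactly what $\Sa$ and $\Sr$ represent; (B) a symmetry argument conditioned on the final sample rate showing uniform selection; (C) the space, time, and success probability bounds via Chernoff. For (A), I would prove by induction on the number of points processed that $\Sa$ equals the representative points of the currently sampled groups and $\Sa\cup\Sr$ equals the representative points of the candidate groups under the current $h_R$. The inductive step relies on the well-separated assumption: since groups have diameter $\le\alpha$ and are pairwise at distance $>2\alpha$, when a new point $p$ arrives, either $G(p)$ already has its representative recorded in $\Sa\cup\Sr$ (in which case Line~\ref{line:a-2} correctly discards $p$ because any stored $u\in G(p)$ satisfies $d(u,p)\le\alpha$ and well-separatedness rules out confusion with other groups), or $p$ itself is the first point of $G(p)$ seen so far and Lines~\ref{line:a-3}--\ref{line:a-4} correctly classify it using Fact~(a). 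When Line~\ref{line:a-5} doubles $R$, Fact~(b) ensures the update is a monotone filter: points in $\Sa$ may move to $\Sr$ or be dropped, and points in $\Sr$ may be dropped if no neighbor remains sampled, all while preserving the invariant.

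For (B), I would condition on the value $R^\ast$ of $R$ at the query time. Because $h$ assigns independent uniform values to distinct cells, and distinct groups occupy distinct representative cells by Fact~(a), the set of sampled groups is a random subset in which each of the $n$ groups is included independently with probability $1/R^\ast$. Conditioning further on $|\Sa|=s\ge 1$, every size-$s$ subset of $\{G_1,\ldots,G_n\}$ is equally likely, so returning a uniformly random point from $\Sa$ gives $\Pr[\text{output}\in G_i\mid |\Sa|=s]=1/n$ for every $i$; summing over $s\ge 1$ then yields $\Pr[\text{output}\in G_i]=1/n$, modulo the failure event $|\Sa|=0$ which is absorbed into the $1/m$ slack. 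Although $R^\ast$ is itself data-dependent, revealing $h$ level by level (the bits determining $h_R$ for each successive $R=1,2,4,\ldots$) lets me treat $R^\ast$ as a fixed stopping time for the symmetry argument without circularity.

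For (C), the doubling rule enforces $|\Sa|\le\kappa_0\log m$ deterministically; I would bound $|\Sr|$ by charging each rejected representative $p$ to a sampled cell $C\in\adj(p)$. In constant dimension $|\adj(p)|=O(1)$, and by well-separatedness and group diameter $\le\alpha$, the set of group representatives lying within distance $\alpha$ of any single cell has size $O(1)$; hence $|\Sr|=O(|\Sa|)=O(\log m)$. Together with an $O(\log m)$-word description of a $\Theta(\log m)$-wise independent $h$, the total space is $O(\log m)$ words, and each new point is compared against at most $|\Sa\cup\Sr|=O(\log m)$ stored points at Lines~\ref{line:a-2}--\ref{line:a-4}, giving $O(\log m)$ processing time. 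For the high-probability guarantee, I would argue that after stabilization $R^\ast=\Theta(n/\log m)$, so $\bE[|\Sa|]=\Theta(\log m)$; Lemma~\ref{lem:Chernoff} then gives $|\Sa|\in[\Omega(\log m),\kappa_0\log m]$ at any fixed time step with probability $1-1/m^2$, and a union bound over the $m$ time steps yields overall success probability $1-1/m$. The main obstacle I anticipate is making the conditional symmetry argument in (B) fully rigorous in the face of the adaptive $R^\ast$, and ensuring that across a doubling of $R$, the invariant of (A) is preserved without any ``lost'' candidate groups whose representative has already been discarded earlier; the level-by-level revelation of $h$ combined with Fact~(b) is the cleanest way I see to resolve both issues.
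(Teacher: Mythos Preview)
Your invariant (A) and the uniformity argument (B) are sound; in fact (B) is more careful than the paper's one-line justification about why the adaptive $R^\ast$ does not spoil uniformity. Either your level-by-level revelation of $h$, or the observation that $R^\ast$ is a symmetric function of the hash values $\{h(\cell(p)):p\in\Sf\}$ and hence conditioning on it preserves exchangeability among the groups, makes this rigorous.

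The genuine gap is in (C), in your bound on $|\Sr|$. Your charging shows only that each sampled cell can be charged by $O(1)$ rejected representatives, hence $|\Sr|=O(\#\{\text{sampled cells receiving a charge}\})$. But nothing ties the number of such sampled cells to $|\Sa|$: a sampled cell $C\in\adj(p)$ need not contain any representative at all, so it contributes nothing to $\Sa$. Deterministically one can have $|\Sa|=1$ and $|\Sr|=n-1$ (sample exactly one neighbor cell in $\adj(p_i)\setminus\{\cell(p_i)\}$ for every $i\ge 2$, and sample $\cell(p_1)$). So $|\Sr|=O(|\Sa|)$ is \emph{not} a deterministic consequence of well-separatedness; it only holds with high probability. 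The paper supplies the missing probabilistic step in Lemma~\ref{lem:Sr}: since $|\adj(p)|\le 25$ and cells are sampled independently, $\Pr[p\in\Sr]\le\frac{24}{25}\Pr[p\in\Sa\cup\Sr]$ for each $p\in\Sf$, and a Chernoff bound then gives $|\Sr|=O(|\Sa|)=O(\log m)$ with probability $1-1/m^2$ at any fixed time, after which a union bound over $m$ steps finishes.

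A smaller issue: your route to $|\Sa|>0$ via ``after stabilization $R^\ast=\Theta(n/\log m)$'' is circular, since showing that $R^\ast$ does not overshoot already presumes $|\Sa|$ did not collapse at an earlier doubling. The paper's Lemma~\ref{lem:0} avoids this: immediately before each doubling $|\Sa|>\kappa_0\log m$, so the resample kills all of them with probability at most $(1/2)^{\kappa_0\log m}\le 1/m^2$, and a union bound over at most $m$ doublings yields the claim directly.
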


The correctness of the algorithm is straightforward. First, $\Sa$ is a random subset of $\Sf$ since each point $p \in \Sf$ is included in $\Sa$ if and only if  $h_R(\cell(p)) = 0$. Second, the outputted point is a random point in $\Sa$.  The only thing left to be shown is that we have $\abs{\Sa} > 0$ at any time step.

\begin{lemma}
\label{lem:0}
With probability $1 - 1/m$, we have $\abs{\Sa} > 0$ throughout the execution of the algorithm.
\end{lemma}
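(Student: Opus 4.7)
\textbf{Proof plan for Lemma~\ref{lem:0}.}
The plan is to isolate doubling events (Line~11 of Algorithm~\ref{alg:IW}) as the only moments at which $|\Sa|$ can decrease, and then show that each such event empties $\Sa$ with probability at most $m^{-\kappa_0}$ for sufficiently large $\kappa_0$.

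First, I would argue by direct inspection of the main loop that between any two consecutive doubling steps, $\Sa$ is monotonically non-decreasing: each iteration either leaves $\Sa$ unchanged (Lines~5--6 and~9--10) or appends exactly one new representative (Line~7--8). Combined with the fact that $R=1$ initially and $h_1 \equiv 0$, so that the very first processed point is necessarily added to $\Sa$, this already implies $|\Sa|\ge 1$ up until the first doubling event.

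Next, consider any doubling event at which $R$ becomes $2R$. The triggering condition on Line~10 forces $|\Sa_{\text{old}}| \ge \kappa_0 \log m + 1$ just before the update. A point $p \in \Sa_{\text{old}}$ survives into $\Sa_{\text{new}}$ iff $h_{2R}(\cell(p))=0$; since $h_R = h \bmod R$ and $h_{2R} = h \bmod 2R$, conditioning on $h_R(\cell(p))=0$, the full randomness of $h$ makes $h_{2R}(\cell(p))=0$ a fair coin flip. By Fact~1(a), distinct representatives in $\Sa_{\text{old}}$ lie in distinct cells, so these coin flips are mutually independent. Hence $|\Sa_{\text{new}}|$ stochastically dominates a $\mathrm{Binomial}(\kappa_0 \log m + 1, 1/2)$ variable, giving
\[
\Pr[|\Sa_{\text{new}}|=0] \;\le\; 2^{-(\kappa_0 \log m + 1)} \;\le\; m^{-\kappa_0}.
\]
The same bound (or a sharper one) applies to any cascading doubling that fires immediately afterwards, since in that case $|\Sa_{\text{old}}|$ would only be larger.

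Finally, I would bound the total number of doubling events across the entire execution by $O(\log m)$: $R$ is doubled each time and is bounded above by the range of the hash function, which is polynomial in $m$. A union bound then yields
\[
\Pr\bigl[\exists t \le m:\ |\Sa_t|=0\bigr] \;\le\; O(\log m)\cdot m^{-\kappa_0} \;\le\; 1/m,
\]
for $\kappa_0$ a sufficiently large constant. The main subtlety that requires care is establishing the independent fair-coin structure of the shrinkage step: the argument relies crucially on Fact~1(a) to ensure that the indices $\cell(p)$ over $p \in \Sa_{\text{old}}$ are distinct, so that the relevant values of the fully random hash $h$ are independent and the binomial claim is valid. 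Everything else is a standard Chernoff/union-bound calculation.
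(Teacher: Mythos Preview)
Your proof is correct and follows essentially the same argument as the paper: $\Sa$ can only shrink at doubling events, each such event empties $\Sa$ with probability at most $2^{-\kappa_0\log m}$, and a union bound over all doublings finishes. You are more careful than the paper in justifying the independence of the survival coin flips via Fact~1(a); your $O(\log m)$ bound on the number of doublings is tighter than the paper's crude ``at most $m$ resample steps,'' but your stated justification (that the hash range is polynomial in $m$) is not quite right since the range depends on $\Delta$ rather than $m$---this is harmless, as the paper's bound of $m$ doublings already suffices once $\kappa_0\ge 2$.
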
 

\begin{proof}
At the first time step of the streaming process, $p_1$ is added into $\Sa$ with certainty since $R$ is initialized to be $1$.  Then $\Sa$ keeps growing.  At the moment when $\abs{\Sa} > \kappa_0 \log m$, $R$ is doubled so that each point in $\Sa$ is resampled with probability $\frac{1}{2}$.  After the resampling, 
\begin{equation}
\label{eq:b-0}
\Pr[\abs{\Sa} = 0] \le \left(\frac{1}{2}\right)^{\kappa_0 \log m} \le \frac{1}{m^2}.
\end{equation}
By a union bound over at most $m$ resample steps, we conclude that with probability $1 - 1/m$, $\abs{\Sa} > 0$ throughout the execution of the algorithm.  
\end{proof}

We next analyze the space and time complexities of Algorithm~\ref{alg:IW}.

\begin{lemma}
\label{lem:Sr}
With probability $(1 - 1/m)$ we have $\Sr = O(\log m)$ throughout the execution of the algorithm.
\end{lemma}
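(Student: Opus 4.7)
The plan is to bound $|\Sr|$ in two stages. Stage one controls the ``load factor'' $n_t/R_t = \bE[|\Sa_t|]$, leveraging the rate-doubling invariant of the algorithm. Stage two uses a constant-dimensional geometric double-counting together with Chernoff to show $|\Sr_t|$ is within a constant factor of $n_t/R_t$ with high probability.

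For stage one, I would observe that $|\Sa_t| = \sum_G \mathbbm{1}[h_{R_t}(\cell(p_G)) = 0]$ is a sum of $n_t$ independent Bernoulli($1/R_t$) variables (independence follows from Fact~1(a), since representatives of distinct groups sit in distinct cells), so $\bE[|\Sa_t|] = n_t/R_t$. Algorithm~\ref{alg:IW} actively enforces $|\Sa_t| \le \kappa_0 \log m + 1$ by doubling $R$ whenever this threshold is exceeded. If $n_t/R_t$ were ever as large as $C_1 \log m$ for a sufficiently large constant $C_1$, a Chernoff lower-tail bound (Lemma~\ref{lem:Chernoff}) would force $|\Sa_t| > \kappa_0 \log m + 1$ except with probability $\le 1/m^3$, contradicting the invariant. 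A union bound over $t \in [m]$ then yields $n_t/R_t = O(\log m)$ for all $t$ with probability at least $1 - 1/(2m)$.

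For stage two, I would use $|\Sr_t| \le |\Sa_t| + |\Sr_t| =: N_t$, the number of candidate groups. Each candidate $G$ has some sampled cell in $\mathcal{N}(G) := \adj(p_G) \cup \{\cell(p_G)\}$, so swapping sums gives
\begin{equation*}
N_t \;\le\; \sum_C \mathbbm{1}[h_{R_t}(C) = 0] \cdot |\{G : C \in \mathcal{N}(G)\}|.
\end{equation*}
The main obstacle I expect is the geometric claim that in constant dimensions each cell $C$ lies in $\mathcal{N}(G)$ for only $O(1)$ groups: any such representative $p_G$ is within distance $\alpha$ of $C$, hence in one of $O(1)$ cells (because cells have side $\alpha/2$), and by Fact~1(a) each such cell hosts at most one group. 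Consequently $N_t \le c \sum_{C \in \mathcal{C}_t} \mathbbm{1}[h_{R_t}(C) = 0]$ for a constant $c$ and a cell-set $\mathcal{C}_t$ of size $O(n_t)$, an independent-Bernoulli sum whose expectation is $O(n_t/R_t) = O(\log m)$ conditioned on stage one. A Chernoff upper-tail bound then yields $N_t = O(\log m)$ with probability at least $1 - 1/m^3$ at each time $t$, and a final union bound over $t \in [m]$ closes the argument with total failure probability at most $1/m$.
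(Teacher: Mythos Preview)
Your two-stage route is sound and will close, but it differs from the paper's argument and carries one loose end you should tighten.

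\textbf{Comparison.} The paper never bounds $n_t/R_t$. Instead it works entirely inside the candidate set $S=\Sa\cup\Sr$: for each representative $p$, since $|\adj(p)|\le 25$ in two dimensions, one has $\Pr[p\in\Sr]\le\tfrac{24}{25}\Pr[p\in S]$. A Chernoff bound then shows that (unless $|S|$ is already $O(\log m)$) the fraction of $S$ lying in $\Sr$ is at most $24/25+o(1)$ with high probability, so $|\Sa|\ge\Omega(|S|)$. Combining with the algorithm's invariant $|\Sa|=O(\log m)$ gives $|\Sr|\le|S|=O(\log m)$ directly. This sidesteps your Stage~1 entirely: the paper never needs to reason about the load $n_t/R_t$ or about how the adaptive doubling interacts with the hash. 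Your approach, by contrast, extracts the intermediate quantity $n_t/R_t$ first; this buys you an explicit handle on the sampling rate (potentially useful elsewhere) at the cost of a longer argument.

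\textbf{A gap to patch.} In both of your stages you treat $R_t$ as fixed when applying Chernoff, but $R_t$ is a function of the same hash $h$ that generates your Bernoulli variables. In Stage~1, the sentence ``a Chernoff lower-tail bound would force $|\Sa_t|>\kappa_0\log m+1$'' is only valid for a \emph{fixed} $R$; the event $\{R_t=R\}$ is correlated with $|\Sa_t|$. Likewise in Stage~2, conditioning on the Stage~1 event (which involves $R_t$) perturbs the distribution of your cell-indicator sum. The standard fix is to union-bound over the $O(\log m)$ possible values $R\in\{1,2,4,\dots\}$ as well as over $t\in[m]$: for each fixed pair $(t,R)$ the relevant sums are genuine independent-Bernoulli sums, Chernoff applies cleanly, and the extra $O(\log m)$ factor in the union bound is absorbed by choosing slightly larger constants. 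You should state this explicitly.
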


\begin{proof}
Consider a fixed time step.  Let $S = \Sa \cup \Sr$.  For a fixed $p \in \Sf$, since $\abs{\adj(p)} \le 25$ (we are in the $2$-dimensional Euclidean space), and each cell is sampled randomly, we have
\begin{equation}
\label{eq:b-1}
\Pr[p \in \Sr] \le \frac{24}{25} \cdot \Pr[p \in S].
\end{equation}
We only need to prove the lemma for the case $\Pr[p \in \Sr] = \frac{24}{25} \cdot \Pr[p \in S]$; the case $\Pr[p \in \Sr] < \frac{24}{25} \cdot \Pr[p \in S]$ follows directly since $p$ is less likely to be included in $\Sr$.

For each $p \in S$, define $X_p$ to be a random variable such that $X_p = 1$ if $p \in \Sr$, and $X_p = 0$ otherwise.  Let $X = \sum_{p \in \Sr} X_p$.  We have $X = \abs{\Sr}$ and $\bE[X] = \frac{24}{25} \abs{S}$.  By a Chernoff bound (Lemma~\ref{lem:Chernoff}), we have
\begin{equation}
\label{eq:b-2}
\Pr\left[X - \bE[X] > 0.01 \bE[X]\right] \le e^{- \frac{0.01^2 \cdot \bE[X]}{3}}.
\end{equation}
If $\abs{S} \le 80000 \log m$ then we immediately have $\abs{\Sr} \le \abs{S} = O(\log m)$. Otherwise by (\ref{eq:b-2}) we have 
$$\Pr[X > 1.01 \bE[X]] < 1/m^2.$$ We thus have
\begin{eqnarray*}
1/m^2 &>& \Pr[X > 1.01 \bE[X]] \\
&=& \Pr[X > 1.01 \cdot \frac{24}{25} \abs{S}] \\
&=& \Pr[X > 0.9696(X + \abs{\Sa})] \\
&=& \Pr[0.0304 X > 0.9696 \abs{\Sa}].
\end{eqnarray*}
According to Algorithm~\ref{alg:IW} it always holds that $\abs{\Sa} = O(\log m)$. Therefore $\abs{\Sr} = X = O(\log m)$ with probability at least $1 - 1/m^2$.  The lemma follows by a union bound over $m$ time steps of the streaming process.
\end{proof}

By Lemma~\ref{lem:Sr} the space used by the algorithm can be bounded by $O(\abs{\Sa} + \abs{\Sr}) = O(\log m)$ words.  The processing time per point is also bounded by $O(\abs{\Sa} + \abs{\Sr})$.

\subsection{Sliding Windows}
\label{sec:SW}

We now consider the sliding window case. Let $w$ be the window size.  We first present an algorithm that maintains a set of sampled points in $\Sa$ with a fixed sample rate $1/R$; it will be used as a subroutine in our final sliding window algorithm (Section~\ref{sec:alg-SW}).  

\subsubsection{A Sliding Window Algorithm with Fixed Sample Rate}
\label{sec:SW-R}

We describe the algorithm in Algorithm~\ref{alg:SW-R}, and explain it in words below.  


\begin{algorithm}[t]
  \DontPrintSemicolon
  \For{each expired point $p$}{
    \If{$\exists (u, p) \in A$}{
      delete $(u, p)$ from $A$, delete $u$ from $\Sa \cup \Sr$ \label{line:b-3}
    } 		

  }
  
  \For{each arriving point $p$}{
    \tcc{if there already exists a point of the same group in $\Sa \cup \Sr$}
    \If{$\exists u \in \Sa \cup \Sr$ ~s.t. $d(u, p) \leq \alpha$}{
       		$A \gets (u, p) \cup A \backslash (u, \cdot)$ \label{line:b-4}
       }
    \tcc{otherwise we set $p$ as a representative of its group}
    \ElseIf{$h_R(\cell(p)) = 0$ \label{line:b-7}}{
      $\Sa \gets \Sa \cup \{p\}, A \gets A \cup (p, p)$ \label{line:b-5}
    }
    \ElseIf{$\exists C \in \adj(p)$ s.t. $h_R(C) = 0$ \label{line:b-8}}{
      $\Sr \gets \Sr \cup \{p\}, A \gets A \cup (p, p)$ \label{line:b-6}\;
    }
  }
\caption{{\sc SW with Fixed Sample Rate $1/R$}}
\label{alg:SW-R}
\end{algorithm}

Besides maintaining the accept set and the reject set as that in the infinite window case, Algorithm~\ref{alg:SW-R} also maintains a set $A$ consisting of key-value pairs $(u, p)$, where $u$ is the {\em representative} point of a candidate group ($u$ can be a point outside the sliding window as long as the group has at least one point inside the sliding window), and $p$ is the {\em latest} point of the same group (thus $p$ must be in the sliding window).  Define $A(\Sa) = \{p\ |\ \exists u  \in \Sa  \text{ s.t. } (u, p) \in A\}$. 


For each sliding window, we guarantee that each candidate group $G$ has exactly one representative point.  This is achieved by the following process: for each candidate group $G$, if there is no maintained representative point, then we take the first point $u$ as the representative point (Line~\ref{line:b-5} and Line~\ref{line:b-6}).  When the last point $p$ of the group expires, we delete the maintained representative point $u$ from $\Sa\cup \Sr$, and delete $(u, p)$ from $A$ (Line~\ref{line:b-3}). 

For a new arriving point $p$, if there already exists a point $u \in \Sa \cup \Sr$ in the same group $G$, then we simply update the last point in the pair $(u, \cdot)$ we maintained for $G$ (Line~\ref{line:b-4}).  Otherwise $p$ is the first point of $G(p)$ in the sliding window. If $G(p)$ is a sampled group, then we add $p$ to $\Sa$ and $(p, p)$ to $A$ (Line~\ref{line:b-5}); else if $G(p)$ is a rejected group, then we add $p$ to 
$\Sr$ and $(p, p)$ to $A$ (Line~\ref{line:b-6}).

The following observation is a direct consequence of Algorithm~\ref{alg:SW-R}. It follows from the discussion above and the testing at Line~\ref{line:b-7} of Algorithm~\ref{alg:SW-R}.

\begin{figure}
  \centering
  \includegraphics[width=0.4\textwidth]{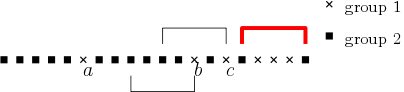}
  \caption{Representative points in sliding windows. There are two different groups, and the red window is the current sliding window (of size $w = 5$). Point $c$ is {\em not} the representative point of Group $1$ because the window right before it (inclusive) contains point $b$ which is also in Group $1$. Point $b$ is the representative point because it is the \emph{latest} point such that there is no other point of Group $1$ in the window right before $b$.} 
  \label{fig:rep}
\end{figure}

\begin{observation}
\label{lem:SW-R}
In Algorithm~\ref{alg:SW-R}, at any time for the current sliding window, we have
\begin{enumerate}
\item Each group has exactly one representative point, which is fully determined by the stream and is {\em independent} of the hash function. More precisely, a point $p$ becomes the representative point of group $G$ in the current window if $p$ is the latest point in $G$ such that the window right before $p$ (inclusive) has no point in $G$.  See Figure~\ref{fig:rep} for an illustration.  

\item The representative point of each group in the current window is included in $\Sa$ with probability $1/R$.
\end{enumerate}
\end{observation}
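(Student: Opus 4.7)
The plan is to establish Part 1 by induction on the timeline of arrival and expiration events, maintaining a structural invariant about $A$, $\Sa$, $\Sr$, and then to read off Part 2 almost immediately.

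First, I would formulate the inductive invariant: at any time step, and for every group $G$ with at least one point in the current window, there is exactly one pair $(u_G, p_G) \in A$ with both coordinates belonging to $G$; moreover $u_G \in \Sa \cup \Sr$; $p_G$ is the most recent point of $G$ that has arrived so far; and $u_G$ is precisely the point claimed by the observation, namely the latest point of $G$ such that the window ending at $u_G$ contains no other point of $G$. The base case (empty stream) is vacuous.

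For the inductive step I would case on the event type. If an arrival $p\in G$ occurs and some $u \in \Sa \cup \Sr$ lies within distance $\alpha$ of $p$, then well-separatedness forces $u\in G$; by the invariant $u=u_G$, and Line~\ref{line:b-4} merely updates $p_G \gets p$, preserving all clauses of the invariant. If instead no such $u$ exists, then by the invariant $G$ had no point in the window just before $p$ arrived, so the window ending at $p$ contains no earlier point of $G$; Lines~\ref{line:b-7}--\ref{line:b-6} then correctly install $p$ as the new $u_G = p_G$ in the appropriate set. For expirations on a point $p$, the pair $(u,p)\in A$ exists for some $u$ if and only if $p$ equals the stored $p_G$ for its group, which by the invariant is the latest arriving point of $G$; when this holds, all earlier points of $G$ have expired no later than $p$, so no points of $G$ remain in the window, and deleting $u_G$ from $\Sa\cup\Sr$ is exactly the right action.

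The main subtlety lies in the expiration branch: I must justify that when a non-latest point of $G$ expires, the algorithm correctly does nothing. This hinges on the fact that Line~\ref{line:b-4} overwrites the second coordinate on every subsequent arrival within $G$, so at every moment $A$ contains only the latest arrival $p_G$ as the second coordinate, never an intermediate one; hence the test $\exists (u,p)\in A$ fails on such stale expirations. Once Part 1 is established, Part 2 is immediate: since $u_G$ is determined entirely by the stream (independently of $h$), and the algorithm places $u_G$ into $\Sa$ precisely when $h_R(\cell(u_G))=0$ is observed at Line~\ref{line:b-7} upon its arrival, and $h_R$ is a random hash function taking each value in its range of size $R$ uniformly and independently of the stream, we have $\Pr[u_G \in \Sa] = 1/R$.
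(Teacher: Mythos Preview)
The paper itself does not prove this observation; it simply asserts it ``follows from the discussion above and the testing at Line~\ref{line:b-7} of Algorithm~\ref{alg:SW-R}.''  Your inductive argument is therefore more detailed than anything the paper provides, but the invariant you state is too strong and is in fact false.

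Your invariant asserts that \emph{every} group $G$ with a point in the current window has an entry $(u_G,p_G)\in A$ with $u_G\in\Sa\cup\Sr$. This fails for \emph{ignored} groups---those whose stream-determined representative $u_G$ satisfies neither $h_R(\cell(u_G))=0$ nor $\exists\,C\in\adj(u_G)$ with $h_R(C)=0$. For such groups Lines~\ref{line:b-7}--\ref{line:b-6} do nothing and no entry is created. Consequently, the contrapositive you invoke (``If instead no such $u$ exists, then by the invariant $G$ had no point in the window just before $p$ arrived'') is unjustified: an ignored group can have points in the window with no trace in $A$, $\Sa$, or $\Sr$. Worse, a later point $q$ of such a group may satisfy $\exists\,C\in\adj(q)$ with $h_R(C)=0$ even though $u_G$ did not (the relevant sampled cell can be within distance $\alpha$ of $q$ but not of $u_G$), so a \emph{non-representative} point can end up in $\Sr$---directly contradicting your invariant.

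The repair is short. Part~1 is purely definitional and needs no algorithmic argument. For Part~2, drop the global invariant and argue directly that when $u_G$ arrives, Line~\ref{line:b-4} cannot fire: any $u\in\Sa\cup\Sr$ with $d(u,u_G)\le\alpha$ must lie in $G$ by well-separation, and the entry $(u,\cdot)$ is deleted exactly when the latest prior point of $G$ expires (Line~\ref{line:b-4} keeps the second coordinate current regardless of whether $u$ is the ``true'' representative), which by the very definition of $u_G$ happens before $u_G$ arrives. Hence $u_G$ reaches Line~\ref{line:b-7} and enters $\Sa$ iff $h_R(\cell(u_G))=0$. The reason ignored groups never contaminate $\Sa$ is the $\adj$ check: if some later $p\in G$ had $h_R(\cell(p))=0$, then $\cell(p)\in\adj(u_G)$ would have forced $u_G\in\Sr$, a contradiction. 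Alternatively, simply restrict your invariant to groups whose representative satisfies one of the two hash conditions; your induction then goes through as written.
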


\subsubsection{A Space-Efficient Algorithm for Sliding Windows}
\label{sec:alg-SW}

We now present our space-efficient sliding window algorithm. Note that the algorithm presented in Section~\ref{sec:SW-R}, though being able to produce a random sample in the sliding window setting, does not have a good space usage guarantee; it may use space up to $w/R$ where $w$ is the window size.  

The sliding window algorithm presented in this section works simultaneously for both sequence-based sliding window and time-based sliding window.

\paragraph{High Level Ideas}
As mentioned, the main challenge of the sliding window algorithm design is that points will expire, and thus we cannot keep decreasing the sample rate.  On the contrary, if at some point there are too few non-expired sampled points, then we have to increase the sample rate to guarantee that there is at least one point in the sliding window belonging to $\Sa$.  However, if we increase the sample rate in the middle of the streaming process, then the neighborhood information of a newly sampled group may already get lost.  In other words, we cannot properly maintain $\Sr$ when the sample rate increases. 

To resolve this issue we have the ``prepare'' such a decrease of $\abs{\Sa}$ in advance.  To this end, we maintain a hierarchical set of instances of Algorithm~\ref{alg:SW-R}, with sample rates $1/R$ being $1, 1/2, 1/4, \ldots$ respectively.  We thus can guarantee that in the lowest level (the one with sample rate $1$) we must have at least one sampled point.  

Of course, to achieve a good space usage we cannot endlessly insert points to all the Algorithm~\ref{alg:SW-R} instances.  We instead  make sure that each level $\ell$ stores at most $\abs{\Sa_\ell \cup \Sr_\ell} = O(\log m)$ points, where $\Sa_\ell$ and $\Sr_\ell$ are the accept set and reject set respectively in the run of an Algorithm~\ref{alg:SW-R} instance at level $\ell$.  We achieve this by maintaining a {\em dynamic} partition of the sliding window. In the $\ell$-th subwindow we run an instance of Algorithm~\ref{alg:SW-R} with sample rate $1/2^\ell$.  For each incoming point,  we ``accept'' it at the highest level $\ell$ in which the point falls into $\Sa_\ell$, and then delete {\em all} points in the accept and reject sets in all the lower levels.  Whenever the number of points in $\Sa_\ell$ at some level $\ell$ exceeds the threshold $c \log m$ for some constant $c$, we ``promote'' most of its points to level $\ell+1$.  The process may cascade to the top level. 

At the time of query we properly resample the points maintained at each $\Sa_\ell\ (\ell = 0, 1, \ldots)$ to unify their sample probabilities, and then merge them to $\Sa$.  In order to guarantee that if the sliding window is not empty then we always have at least one sampled point in $\Sa$, during the algorithm (in particular, the promotion procedure) we make sure that the last point of each level $\ell$ is always in the accept set $\Sa_\ell$.

\begin{remark}
The hierarchical set of windows reminisces the {\em exponential histogram} technique by Datar et al.~\cite{DGIM02} for {\em basic counting} in the sliding window model. However, by a careful look one will notice that our algorithm is very different from exponential histogram, and is (naturally) more complicated since we need to deal with both {\em distinct elements} and {\em near-duplicates}. For example, the exponential histogram algorithm in \cite{DGIM02} partitions the sliding window {\em deterministically} to subwindows of size $1, 2, 4, \ldots$.  Suppose we are only interesting in the representative point of each group,  we basically need to delete all the other points in each group in the sliding window, which will change the sizes of the subwindows.  Handling near duplicates adds another layer of difficulty to the algorithm design; we handle this by employing Algorithm~\ref{alg:SW-R} (which is a variant of the algorithm for the infinite window in Section~\ref{sec:IW}) at each of the subwindows with different sample rates.  The interplay between these components make the overall algorithm involved.
\end{remark}

\paragraph{The Algorithm}
We present our sliding window algorithm in Algorithm~\ref{alg:SW} using Algorithm~\ref{alg:split} and Algorithm~\ref{alg:merge} as subroutines.

\begin{algorithm}[t]
\DontPrintSemicolon
$R_\ell \gets 2^\ell$ for all $\ell = 0, 1, \ldots, L$.

\For{$\ell \gets 0$ \KwTo $L$}{
\tcc{create an algorithm instance according to Algorithm~\ref{alg:SW-R} with fixed sample rate $1/R_\ell$}
	 $\alg_\ell \gets (\bot, \bot, \bot, R_\ell)$\;
}

\For{each arriving point $p$}{
	\For{$\ell \gets L$ {\bf downto} $0$ \label{line:c-1}}{
          $\alg_\ell(p)$ \tcc{feed $p$ to the instance $\alg_\ell$}
          
          \If{$\exists\ (u, p) \in A_\ell$}{
            \tcc{prune all subsequent levels}
			\For{$j \gets \ell - 1$ {\bf downto} $0$ \label{line:c-1-5}}{
				$\alg_j \gets (\bot, \bot, \bot, R_j)$  \label{line:c-2}
			}
			\If{$\abs{\Sa_\ell} > \kappa_0 \log m$  \label{line:c-3}}{
				$j \gets \ell$\;
				create a temporary instance $\alg$\;
				\While{$(|\Sa_j| > \kappa_0 \log m)$}{
					$(\alg, \alg_j) \gets$ {\sc Split}($\alg_j$)\;
					$\alg_{j+1} \gets$ {\sc Merge}($\alg, \alg_{j+1}$)	\;
					$j \gets j+1$  \label{line:c-4}\;
					\lIf{$j > L$}{\Return ``error''} \label{line:c-31}
				}
             }
		}
		 break
	}
} 

\tcc{at the time of query:}
$S \gets \emptyset$\; 
Let $c$ be the maximum index $\ell$ such that $\Sa_\ell \neq \bot$ \label{line:c-5} \; 
\For{$\ell \gets 1$ \KwTo $c$}{
 include each point in the set $\{p\ |\ \exists\ (\cdot, p) \in A_\ell \}$ to $S$ with probability $R_\ell/R_c$  \label{line:c-6}\; 
}
\Return a random point from $S$\;
\caption{Robust $\ell_0$-{\sc Sampling-SW}}
\label{alg:SW}
\end{algorithm}

\begin{algorithm}[t]
\DontPrintSemicolon

create instances $\alg_a = (\Sa_a, \Sr_a, A_a, R_a)$ and $\alg_b = (\Sa_b, \Sr_b, A_b, R_b)$\;
	$t = \max\{t' \ |\ (p_{t'} \in \Sa_\ell) \wedge (h_{R_{\ell+1}}(\cell(p_{t'})) = 0)\}$ \;
	$\Sa_a \gets \{p_k \in \Sa_\ell ~|~ (k \le t) \wedge (h_{R_{\ell+1}}(\cell(p_k)) = 0)\}$; $\Sr_a \gets \{p_k \in \Sr_\ell ~|~ (k \le t) \wedge (h_{R_{\ell+1}}(\cell(p_k)) = 0)\}$; $A_a \gets \{(u, \cdot)\in A_\ell ~|~ u \in \Sa_\ell\}$; $R_a \gets R_{\ell+1}$\;
	$\Sa_b \gets \{p_k \in \Sa_\ell ~|~ k > t\}$; $\Sr_b \gets \{p_k \in \Sr_\ell ~|~ k > t\}$; $A_b \gets \{(u, \cdot)\in A_\ell ~|~ u \in \Sa_\ell\}$; $R_b \gets R_\ell$\;

\Return ($\alg_a, \alg_b$)

\caption{{\sc Split($\alg_\ell$)}}
\label{alg:split}
\end{algorithm}

\begin{algorithm}[t]
\DontPrintSemicolon
create a temporary instance $\alg = (\Sa, \Sr, A, R)$\;
$\Sa \gets \Sa_a \cup \Sa_b$;
$\Sr \gets \Sr_a \cup \Sr_b$;
$A \gets A_a \cup A_b$;
$R \gets R_a$\;

\Return $\alg$

\caption{{\sc Merge($\alg_a, \alg_b$)}}
\label{alg:merge}
\end{algorithm}

We use $\alg$ to represent an instance of Algorithm~\ref{alg:SW-R}.  For convenience, we also use $\alg$ to represent all the data structures that are maintained during the run of Algorithm~\ref{alg:SW-R}, and write $\alg = (\Sa, \Sr, A, R)$, where $\Sa, \Sr$ are the accept and reject sets respectively, $A$ is the key-value pair store, and $R$ is the reciprocal of the sample rate.

Set $R_\ell = 2^\ell$ for $\ell = 0, 1, \ldots, L = \log w$.  In Algorithm~\ref{alg:SW} we create $L$ instances of Algorithm~\ref{alg:SW-R} with empty $\Sa_\ell, \Sr_\ell, A_\ell$ (denoted by `$\perp$'), and sample rates $1/R_\ell$ respectively.  We call the instance with $R_\ell = 2^\ell$ the level $\ell$ instance.

When a new point $p$ comes, we first find the highest level $\ell$ such that $p$ is sampled by $\alg_\ell$ (i.e., $p \in \Sa_\ell$), and then delete all the structures of $\alg_j\ (j < \ell)$, except keep their sample rates $1/R_j$ (Line~\ref{line:c-1} to Line~\ref{line:c-2}).  

If after including $p$, the size of $\Sa_\ell$ becomes more than $\kappa_0 \log m$, we have to do a series of updates to restore the invariant that the accept set of each level contains at most $\kappa_0 \log m$ points at any time step (Line~\ref{line:c-3} to Line~\ref{line:c-4}).  To do this, we first split the instance of $\alg_\ell$ into two instances (Algorithm~\ref{alg:split}).  Let point $p$ be the last point in $\Sa_\ell$ which is sampled by hash function $h_{R_{\ell+1}}$.  We promote all the points in $\Sa_\ell \cup \Sr_\ell$ arriving before (and include) $p$ to level $\ell+1$ by resampling them using $h_{R_{\ell+1}}$, which gives a new level $\ell+1$ instance $\alg$.  

We next try to merge $\alg$ with $\alg_{\ell+1}$ who have the same sample rate by merging their accept/reject sets and the sets of key-value pairs (Algorithm~\ref{alg:merge}).  The merge may result $\abs{\Sa_{\ell+1}} > \kappa_0 \log m$, in which case we have to perform the split and merge again.  These operations may propagate to the upper levels until we research a level $\ell$ in which we have $\abs{\Sa_\ell} \le \kappa_0 \log m$ after the merge.

At the time of query, we have to integrate the maintained samples in all $L$ levels.  Since at each level we sample points use different sample rates $1/R_\ell$, we have to resample each point in $\Sa_\ell$ with probability $R_\ell / R_c$ where $c$ is the largest level that has a non-empty accept set (Line~\ref{line:c-5} to Line~\ref{line:c-6}).

\paragraph{Correctness and Complexity}
In this section we prove the following theorem.  
\begin{theorem}
\label{thm:SW}
In constant dimensional Euclidean space for a well-separated dataset,
there exists a sliding window algorithm (Algorithm~\ref{alg:SW}) that with probability $1 - 1/m$, at any time step, it outputs a robust $\ell_0$-sample.  The algorithm uses $O(\log w \log m)$ words of space and $O(\log w \log m)$ amortized processing time per point.
\end{theorem}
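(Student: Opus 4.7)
The plan is to establish Theorem~\ref{thm:SW} by proving three claims: (i) the distributional correctness of the output conditional on a success event, (ii) the success event holds with probability at least $1-1/m$ across the entire stream, and (iii) the stated space and per-point time bounds. My analysis hierarchically generalizes the proof of Theorem~\ref{thm:IW} and carefully tracks how points migrate between levels via \textsc{Split} and \textsc{Merge}.

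First I would establish a structural invariant: at every time step, for each populated level $\ell\le c$, the state of $\alg_\ell$ coincides with what one would obtain by executing Algorithm~\ref{alg:SW-R} with sample rate $1/R_\ell$ on a uniquely determined sub-sequence $W_t^\ell$ of the current window $W_t$, and $W_t^0,W_t^1,\ldots,W_t^c$ partition $W_t$ into consecutive pieces ordered by age. The cases to verify are (a) the per-point sweep in Lines~\ref{line:c-1}--\ref{line:c-2} appends $p$ to the subwindow of the highest level where $p$ is incorporated and resets the lower subwindows, and (b) \textsc{Split}/\textsc{Merge} preserve the invariant because the nesting $h_{2R}(x)=0\Rightarrow h_R(x)=0$ implies that resampling an existing accept/reject set at the coarser rate $1/R_{\ell+1}$ is equivalent to rerunning Algorithm~\ref{alg:SW-R} at that rate on the same sub-sequence, while \textsc{Merge} corresponds exactly to adjoining adjacent subwindows that already share a common sample rate. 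Given this invariant, Observation~\ref{lem:SW-R} applied per level yields that the representative of every group $G_i$ whose first in-window occurrence lies in $W_t^\ell$ is recorded in $\Sa_\ell$ with probability exactly $1/R_\ell$; the inclusion events are independent across distinct groups because intergroup distance exceeds $2\alpha$ while a grid cell has diameter $\alpha/\sqrt{2}$, so distinct groups occupy disjoint cells.

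With the invariant in hand I would derive the output distribution. At query time the resampling at Line~\ref{line:c-6} equalizes the per-level inclusion probabilities to the common value $1/R_c$ per group, yielding a family of independent Bernoulli indicators for the groups in the window. A standard symmetry argument then gives $\Pr[q\in G_i \mid |S|\ge 1] = 1/n$ for each $i$, matching Definition~\ref{def:sampling}. For the success event $\{|S|\ge 1\}$ I mirror Lemma~\ref{lem:0}: the doubling rule at Line~\ref{line:c-3} keeps $\abs{\Sa_c}$ in the range $[1,\kappa_0\log m]$ at the top populated level, so each resampling step empties $\Sa_c$ with probability at most $2^{-\kappa_0\log m}\le m^{-2}$ as in~(\ref{eq:b-0}), and a union bound over $m$ triggering events yields the $1-1/m$ guarantee. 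I also need to check the error branch at Line~\ref{line:c-31}: since the current window contains at most $w$ groups, at level $L=\log w$ the expected $\abs{\Sa_L}$ is at most $n/w\le 1$, so Chernoff gives $\abs{\Sa_L}\le\kappa_0\log m$ with failure probability at most $1/m^2$, and the $j>L$ case thus never triggers inside our confidence envelope.

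For the resource bounds, I would extend Lemma~\ref{lem:Sr} level by level: the geometric fact $\abs{\adj(p)}=O(1)$ still holds in constant dimension, so the same Chernoff argument gives $\abs{\Sa_\ell}+\abs{\Sr_\ell}=O(\log m)$ with probability $1-1/m^2$ at each populated level, and a union bound over the $L=O(\log w)$ levels and $m$ time steps yields the $O(\log w\log m)$-word space bound. For per-point time, the deterministic sweep at Line~\ref{line:c-1} costs $O(L\cdot\log m)=O(\log w\log m)$ because each per-level update compares $p$ against $O(\log m)$ stored representatives. The cost of \textsc{Split} and \textsc{Merge} is handled by a potential argument that charges each promotion to the point being promoted: every point can be promoted at most $L=O(\log w)$ times over its lifetime at $O(\log m)$ work per promotion, giving an amortized $O(\log w\log m)$ per point. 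I expect the main technical obstacle to be proving and correctly exploiting the structural invariant under \textsc{Split} and \textsc{Merge}, since that is exactly what reduces the hierarchical algorithm to a clean application of the single-rate tools from Section~\ref{sec:IW}; once the invariant is in place, the remaining arguments adapt from the infinite-window proof almost verbatim.
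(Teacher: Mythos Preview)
Your overall decomposition is close to the paper's, and your structural invariant is essentially the paper's Fact~\ref{fact:subwindow} together with Definition~\ref{def:subwindow}; the error-branch argument, the per-level application of Lemma~\ref{lem:Sr}, and the amortized time analysis all match the paper.

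However, your treatment of the event $\{|S|\ge 1\}$ has a genuine gap. You propose to ``mirror Lemma~\ref{lem:0}'' and argue probabilistically that a resampling step empties $\Sa_c$ with probability at most $2^{-\kappa_0\log m}$. This reasoning is inadequate in the sliding-window setting: $\Sa_c$ can shrink to zero purely by \emph{expiration} of points, with no resampling event to union-bound over, and the lower levels may have been pruned to $\bot$ by Line~\ref{line:c-2}. So a Chernoff-over-resamplings argument does not control $|S|$.

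The paper handles this entirely differently: Lemma~\ref{lem:sample-exist} shows $|S|\ge 1$ \emph{deterministically}, with no probability at all. The key structural fact you are missing is Fact~\ref{fact:end-p}: every nonempty subwindow $W_\ell$ with $\ell\ge 1$ ends with a point in $A(\Sa_\ell)$. This is not an accident but a design choice baked into \textsc{Split}: the split index $t$ is chosen as the \emph{last} point of $\Sa_\ell$ that survives resampling at rate $1/R_{\ell+1}$, so the last point promoted to level $\ell+1$ is always in $\Sa_{\ell+1}$. Combined with $R_0=1$ (level~0 accepts every representative), this guarantees that the most recent point in the window always has its group recorded in some $\Sa_\ell$; since points at level $c$ are included in $S$ with probability $R_c/R_c=1$, we get $|S|\ge 1$ with certainty. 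Your structural invariant, as stated, does not capture this ``last point is always accepted'' property, and without it the nonemptiness of $S$ cannot be established.
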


First, it is easy to show the probability that Algorithm~\ref{alg:SW} outputs ``error'' is negligible.
\begin{lemma}
\label{lem:error}
With probability $1 - 1/m^2$, Algorithm~\ref{alg:SW} will not output ``error'' at Line~\ref{line:c-31} during the whole streaming process.
\end{lemma}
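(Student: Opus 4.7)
The plan is to observe that the only way Algorithm~\ref{alg:SW} reaches Line~\ref{line:c-31} is that, during the cascade of Split/Merge operations, the index $j$ climbs past $L=\log w$. This requires that at some point during the run, immediately after processing an incoming point and/or performing the Merge at the top level, we have $|\Sa_L| > \kappa_0 \log m$. Hence it suffices to show that, at every time step, $|\Sa_L| \le \kappa_0 \log m$ with probability at least $1-1/m^3$, and then union-bound over the $m$ time steps.

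The key structural fact I would first establish is that, regardless of how the Split/Merge operations have been performed so far, $\Sa_L$ consists only of representative points $p$ of candidate groups currently in the sliding window such that $h_{R_L}(\cell(p))=0$. This follows from the definition of $\Sa$ in Algorithm~\ref{alg:SW-R} (points enter $\Sa_\ell$ only when their cell hashes to $0$ under $h_{R_\ell}$) together with the fact that the Split/Merge subroutines only re-sample or union existing $\Sa$/$\Sr$ sets and never inject new representative points whose hash is not $0$ under the target rate. Then, since every group stored has at least one point in the current window of size $w$, the number of distinct representative points eligible to appear in $\Sa_L$ is at most $w$, and each of them is selected independently (over the randomness of $h$) with probability $1/R_L = 1/w$.

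Consequently $|\Sa_L|$ is stochastically dominated by $\mathrm{Bin}(w,1/w)$, so $\bE[|\Sa_L|] \le 1$. Applying the variant of the Chernoff bound (Lemma~\ref{lem:Hoeffding}) with $T=1$, $\mu \le 1$ and $a = \kappa_0 \log m$ gives
\[
\Pr[|\Sa_L| > \kappa_0 \log m] \le e^{-(\kappa_0 \log m - 2)} \le \frac{1}{m^3}
\]
for $\kappa_0$ chosen large enough. A union bound over the at most $m$ time steps at which $|\Sa_L|$ is inspected yields the desired $1/m^2$ bound.

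The only subtle point, which is where I would be careful, is verifying that the Split/Merge pipeline cannot inflate $|\Sa_L|$ beyond the ``fresh sampling'' intuition above: one should check that after a Split, the ``upper'' piece $\alg_a$ (which is promoted to level $\ell+1$) contains exactly those points of $\Sa_\ell$ whose cells are also sampled by $h_{R_{\ell+1}}$, and that the subsequent Merge with $\alg_{\ell+1}$ preserves the property that every element of $\Sa_{\ell+1}$ has $h_{R_{\ell+1}}(\cell(\cdot))=0$ and corresponds to a distinct representative point of a group in the window. This is straightforward from the definitions of Algorithm~\ref{alg:split} and Algorithm~\ref{alg:merge}, but is the only non-trivial bookkeeping step in the argument.
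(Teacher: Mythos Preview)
Your proposal is correct and follows essentially the same approach as the paper: bound the number of groups in the window by $w$, note each is sampled at level $L$ with probability $1/R_L=1/w$ so the expected count is at most $1$, apply Lemma~\ref{lem:Hoeffding} to get the $1/m^3$ tail bound, and union-bound over $m$ steps. Your extra paragraph verifying that the Split/Merge pipeline preserves the invariant ``every element of $\Sa_L$ has $h_{R_L}(\cell(\cdot))=0$ and corresponds to a distinct group in the window'' is a useful piece of bookkeeping that the paper leaves implicit here (it is essentially what is later recorded as Fact~\ref{fact:subwindow}), but the core argument is identical.
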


\begin{proof}
Fix a sliding window $W$.  Let $G_1, \ldots, G_k\ (k \le w)$ be the groups in $W$. The sample rate at level $L$ is $1/R_L = 1/2^L = 1/w$.  
Let $X_\ell$ be a random variable such that $X_\ell = 1$ if the $\ell$-th group is sampled, and $X_\ell = 0$ otherwise. Let $X = \sum_{\ell = 1}^k X_\ell$.  Thus $\bE[X] = k \cdot 1/R_L \le w \cdot 1/w = 1$. By a Chernoff bound (Lemma~\ref{lem:Hoeffding}) we have that with probability $1 - 1/m^3$, we have $X \le \kappa_0 \log m$ for a large enough constant $\kappa_0$.  The lemma then follows by a union bound over at most $m$ sampling steps.
\end{proof}

The following definition is useful for the analysis.
\begin{definition}[subwindows]
\label{def:subwindow}
For a fixed sliding window $W$, we define a subwindow $W_\ell$ for each instance $\alg_\ell\ (\ell = 0, 1, \ldots, L)$ as follows.  $W_L$ starts from the first point in the sliding window to the last point (denoted by $p_{t_L}$) in $A(\Sa_L)$. For $\ell = L - 1, \ldots, 1$, $W_\ell$ starts from $p_{t_{\ell + 1}+1}$ to the last point (denoted by $p_{t_\ell}$) in $A(\Sa_\ell)$.  $W_0$ starts from $p_{t_1+1}$ to the last point in the window $W$.  
\end{definition}

See Figure~\ref{fig:subwindow} for an illustration of subwindows.

\begin{figure}[t]
\centering
\includegraphics[height = 1.6in]{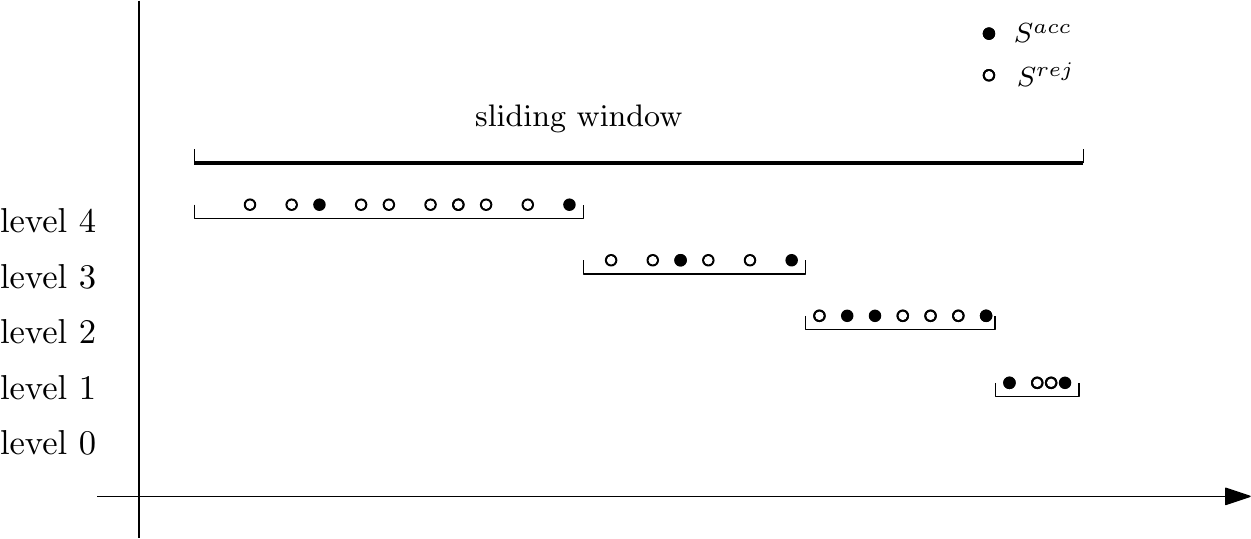}
\caption{An illustration of subwindows of a sliding window;
the subwindow at level $0$ is an empty set.}
\label{fig:subwindow}
\end{figure}

We note that a subwindow can be empty. We also note the following immediate facts by the definitions of subwindows.
\begin{fact}
$W_0 \cup W_1 \cup \ldots \cup W_L$ covers the whole window $W$.
\end{fact}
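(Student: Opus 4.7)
The plan is to prove the fact by directly unwinding Definition~\ref{def:subwindow} and showing that the subwindows are consecutive intervals whose union spans the entire window $W$. The main obstacle is a notational one: the definition only specifies the endpoints of $W_\ell$ when $A(\Sa_\ell)$ is nonempty, so I first need to fix a convention for empty levels. I would adopt the convention that if $A(\Sa_\ell) = \emptyset$, then $W_\ell = \emptyset$ and $t_\ell := t_{\ell+1}$ (with $t_{L+1}$ interpreted as the index just before the first point of $W$); this makes the recursion in the definition well-posed without changing the union.

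With this convention in place, I would verify three things in sequence. First, that $W_L$ begins at the first point of $W$, which is immediate. Second, that for every $\ell \in \{L-1, \ldots, 0\}$, the first point of $W_\ell$ is $p_{t_{\ell+1}+1}$, i.e., the point immediately following the last point of $W_{\ell+1}$; this follows directly from the recursive definition, both when $W_{\ell+1}$ is nonempty (by definition) and when it is empty (by the convention $t_{\ell+1} = t_{\ell+2}$, which shifts the starting index back to where the next nonempty upper subwindow ended). Third, that $W_0$ ends at the last point of $W$, which is again immediate from the definition.

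Combining these three facts, the intervals $W_L, W_{L-1}, \ldots, W_0$ form a consecutive chain of (possibly empty) index ranges, with $W_L$ starting at the first index of $W$ and each successive $W_\ell$ beginning one index after the end of $W_{\ell+1}$, while $W_0$ ends at the last index of $W$. Hence $\bigcup_{\ell=0}^{L} W_\ell$ covers every index in $W$, completing the proof. No Chernoff bounds, hashing arguments, or properties of Algorithm~\ref{alg:SW} are needed; the fact is purely combinatorial once the endpoints are interpreted correctly.
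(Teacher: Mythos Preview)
Your proposal is correct and matches the paper's treatment: the paper states this as an ``immediate fact by the definitions of subwindows'' without further proof, and your argument simply makes explicit the consecutive-interval structure that the definition already encodes. Your convention for empty levels is a reasonable way to make the recursion well-posed and does not deviate from the intended reading.
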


\begin{fact}
\label{fact:end-p}
Each subwindow $W_\ell\ (\ell = 1, \ldots, L)$ ends up with a point in $A(\Sa_\ell)$.
\end{fact}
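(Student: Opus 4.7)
The plan is to observe that Fact~\ref{fact:end-p} is essentially an immediate unpacking of Definition~\ref{def:subwindow}. For each level $\ell \in \{1, \ldots, L\}$, the subwindow $W_\ell$ is defined so that its right endpoint is exactly $p_{t_\ell}$, where $t_\ell$ is the index of the latest arriving point in $A(\Sa_\ell)$ lying in the range allotted to level $\ell$ (the entire sliding window when $\ell = L$, and the range starting at $p_{t_{\ell+1}+1}$ when $\ell < L$). Recalling that $A(\Sa_\ell) = \{p \mid \exists\, u \in \Sa_\ell \text{ with } (u, p) \in A_\ell\}$, the point $p_{t_\ell}$ is by construction a member of $A(\Sa_\ell)$, which is precisely what the fact claims.

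The only point I would flag is well-definedness. If at the current time step there is no point of $A(\Sa_\ell)$ in the relevant range, then $t_\ell$ is undefined and we interpret $W_\ell$ as empty; in that case the statement of the fact is vacuous for that particular $\ell$. Whenever $W_\ell$ is non-empty, however, Definition~\ref{def:subwindow} already singles out an explicit endpoint belonging to $A(\Sa_\ell)$, so no further argument is needed.

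I do not expect any genuine technical obstacle in establishing this fact — it is a bookkeeping observation whose purpose is to support the broader analysis of Algorithm~\ref{alg:SW} and Theorem~\ref{thm:SW}. Its value in what follows is to provide the clean invariant that every non-empty subwindow is anchored on its right by a point whose paired representative sits in the accept set $\Sa_\ell$ of that very level. This anchoring is what will allow one to argue that the query-time aggregation at Line~\ref{line:c-6} yields a valid robust $\ell_0$-sample, and that the space used across all levels stays within $O(\log w \log m)$. The hard parts of the overall argument will live in those subsequent steps — in particular, in showing that the split/merge dynamics preserves the right distribution over representative points at each level — rather than in this fact itself.
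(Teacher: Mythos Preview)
Your proposal is correct and matches the paper's treatment: the paper states this fact without proof, presenting it (together with the preceding fact) as ``immediate facts by the definitions of subwindows,'' which is exactly the unpacking of Definition~\ref{def:subwindow} you describe. Your added remark about the vacuous case when $W_\ell$ is empty is a sensible clarification that the paper leaves implicit.
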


For $\ell = 0, 1, \ldots, L$, let $\G_\ell$ be the set of groups whose last points lie in $W_\ell$, and let $\Sf_\ell$ be the set of their representative points.  From Algorithm~\ref{alg:SW}, \ref{alg:split} and \ref{alg:merge} it is easy to see that the following is maintained during the whole streaming process.

\begin{fact}
\label{fact:subwindow}
During the run of Algorithm~\ref{alg:SW}, at any time step, $\Sa_\ell$ is formed by sampling each point in $\Sf_\ell$ with probability ${1}/{R_\ell}$.
\end{fact}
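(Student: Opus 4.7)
}

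The plan is to prove the fact by induction on the sequence of events the algorithm performs, namely point arrivals, expirations, and the split/merge cascades triggered at Line~\ref{line:c-3} of Algorithm~\ref{alg:SW}. The invariant I will maintain for every level $\ell \in \{0,1,\dots,L\}$ simultaneously is:
(i) the points whose group last-occurrence falls in $W_\ell$ (as defined by the current $A(\Sa_\ell)$-boundary) are exactly the groups whose representatives are tracked by the instance $\alg_\ell$; and
(ii) among these representatives, $\Sa_\ell$ is the subset obtained by independent sampling with probability $1/R_\ell$ via the hash function $h_{R_\ell}$. Since $\Sf_\ell$ is defined as the representative set of $\G_\ell$, (i) and (ii) together give the claim.

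The base case is trivial since all sets are empty. For the inductive step, I would analyze each kind of event as follows. First, for a point arrival, Algorithm~\ref{alg:SW} searches top-down for the highest level $\ell^*$ with $h_{R_{\ell^*}}(\cell(p))=0$ and feeds $p$ into $\alg_{\ell^*}$ via the fixed-rate subroutine of Algorithm~\ref{alg:SW-R}, then wipes $\alg_j$ for $j<\ell^*$. Because $p$ is sampled by $h_{R_{\ell^*}}$, Algorithm~\ref{alg:SW-R} places $(p,p)$ into $A_{\ell^*}$ (or updates an existing pair); consequently the last point in $A(\Sa_{\ell^*})$ advances to $p$, which is precisely how Definition~\ref{def:subwindow} moves the right endpoint of $W_{\ell^*}$ and empties the lower subwindows. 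The groups that newly ``move'' into $W_{\ell^*}$ are exactly those whose last point is $p$ or was previously in one of the wiped lower subwindows, and the correctness of Algorithm~\ref{alg:SW-R} together with Observation~\ref{lem:SW-R} shows that their representatives are sampled with probability $1/R_{\ell^*}$. Expirations are handled symmetrically: when a point $p'$ expires and $(u,p') \in A_\ell$, removal of $u$ from $\Sa_\ell \cup \Sr_\ell$ corresponds exactly to the group no longer having any occurrence in $W_\ell$, so $\Sf_\ell$ and $\Sa_\ell$ shrink consistently.

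The technically most delicate step, and the main obstacle, is the cascade of \textsc{Split}/\textsc{Merge} invocations. I need to show that this cascade preserves both (i) and (ii) across every affected level. For a single split of $\alg_\ell$ into $(\alg_a,\alg_b)$, Algorithm~\ref{alg:split} chooses the cut-off $t$ to be the largest index with $p_t \in \Sa_\ell$ and $h_{R_{\ell+1}}(\cell(p_t))=0$, and then promotes the prefix (up to $p_t$) by resampling with $h_{R_{\ell+1}}$. Two things must be verified: first, that on the prefix, the induced accept set $\Sa_a$ equals $\{p \in \Sf_a : h_{R_{\ell+1}}(\cell(p))=0\}$, which follows because (by inductive hypothesis) the prefix was already a $1/R_\ell$-sample of $\Sf_\ell$ and because $h_{R_{\ell+1}}(\cell(p))=0 \Rightarrow h_{R_\ell}(\cell(p))=0$ by the nested-hash property in Fact~(b); so conditioning the $1/R_\ell$-sample on the finer hash reduces the sampling rate exactly to $1/R_{\ell+1}$, as required by Algorithm~\ref{alg:SW-R}'s semantics at level $\ell+1$. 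Second, I need to check that the subwindow boundaries shift as prescribed by Definition~\ref{def:subwindow}: the new last point of $A(\Sa_{\ell+1})$ after the subsequent \textsc{Merge} with $\alg_{\ell+1}$ is exactly $p_t$ (or a later point already in $A(\Sa_{\ell+1})$), which correctly pushes the right endpoint of $W_{\ell+1}$ to that point and restricts $W_\ell$ to start at $p_{t+1}$. The \textsc{Merge} step itself is a disjoint union of already-consistent structures at the same rate, so it preserves both invariants componentwise.

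Finally, I would conclude by observing that because each cascade step strictly increases the affected level, the cascade terminates (modulo the ``error'' event handled by Lemma~\ref{lem:error}), and at termination every level again satisfies (i)--(ii). This yields the fact at every time step of the algorithm. The conceptually hard point is the second item in the split analysis: carefully matching the algorithmic cut-off $t$ with the abstract boundary ``last point in $A(\Sa_{\ell+1})$'' of Definition~\ref{def:subwindow}, since the two are defined in very different terms and an off-by-one error here would break the correspondence between $\Sf_\ell$ and the points physically stored in $\alg_\ell$.
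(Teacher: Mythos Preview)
The paper does not give a proof of this fact at all; it simply asserts that ``from Algorithm~\ref{alg:SW}, \ref{alg:split} and \ref{alg:merge} it is easy to see that the following is maintained during the whole streaming process.'' Your inductive argument over algorithm events (arrivals, expirations, split/merge cascades) is therefore considerably more detailed than anything in the paper, and the structure you chose is the natural way to make the claim rigorous. In particular, your use of the nested-hash property (Fact~(b)) to argue that resampling a $1/R_\ell$-sample with $h_{R_{\ell+1}}$ yields a $1/R_{\ell+1}$-sample is precisely the mechanism that justifies \textsc{Split}, and your observation that \textsc{Merge} is a disjoint union at a common rate is the right reason for its correctness.

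One place to tighten: in your arrival case you say the top-down loop stops at the highest level with $h_{R_{\ell^*}}(\cell(p))=0$, but the break condition in Algorithm~\ref{alg:SW} is $\exists\,(u,p)\in A_\ell$, which by Algorithm~\ref{alg:SW-R} also fires when $p$ lands in $\Sr_\ell$ or when $p$'s group already has a tracked representative at level~$\ell$. In those sub-cases the right endpoint of $W_\ell$ (the last point in $A(\Sa_\ell)$) need not advance to $p$, yet the lower levels are still wiped. Your invariant~(i) still holds in these situations---the group's last point is now within $W_\ell$ because the lower subwindows become empty, and the representative is already stored in $\alg_\ell$---but your write-up only covers the ``$p$ enters $\Sa_{\ell^*}$'' branch explicitly. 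Spelling out the other two branches would close the argument; the paper's text description (``$p \in \Sa_\ell$'') and its pseudocode are themselves not in perfect agreement here, so some care is warranted.
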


The following lemma guarantees that at the time of query we can always output a sample.

\begin{lemma}
\label{lem:sample-exist}
During the run of Algorithm~\ref{alg:SW}, at any time step, if the sliding window contains at least one point, then when querying we can always return a sample, i.e., $\abs{S} \ge 1$.
\end{lemma}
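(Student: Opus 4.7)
The plan is to exhibit a concrete point of $W$ that is deterministically included in $S$ whenever the sliding window $W$ is non-empty. The key ingredients are the subwindow decomposition $W=W_0\cup W_1\cup\cdots\cup W_L$ of Definition~\ref{def:subwindow}, Fact~\ref{fact:end-p} (each $W_\ell$ with $\ell\ge 1$ ends at a point in $A(\Sa_\ell)$), and Fact~\ref{fact:subwindow} (which ties $\Sa_\ell$ to the sampling of $\Sf_\ell$ with rate $1/R_\ell$).

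First I would identify the ``top active'' level. Because the subwindows cover $W$, at least one $W_\ell$ must be non-empty; let $\ell^\star$ denote the largest such index. For every $\ell>\ell^\star$, $W_\ell=\emptyset$ forces $\Sf_\ell=\emptyset$, and Fact~\ref{fact:subwindow} then gives $\Sa_\ell=\bot$. For $\ell=\ell^\star$, Fact~\ref{fact:end-p} produces a witness $p^\star\in A(\Sa_{\ell^\star})$ that closes $W_{\ell^\star}$, in particular showing $\Sa_{\ell^\star}\ne\bot$. Combining these two observations, the quantity $c$ defined at Line~\ref{line:c-5} of Algorithm~\ref{alg:SW} is exactly $\ell^\star$.

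Next I would trace $p^\star$ through the query routine. At the iteration $\ell=c$, each point of the set $\{p:\exists\,(\cdot,p)\in A_c\}$ is inserted into $S$ with probability $R_c/R_c=1$. Since $p^\star\in A(\Sa_c)\subseteq\{p:\exists\,(\cdot,p)\in A_c\}$, the point $p^\star$ is deterministically added to $S$, yielding $|S|\ge 1$ as required.

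The only subtle step is the boundary case $\ell^\star=0$, for which Fact~\ref{fact:end-p} is not formally stated. Here I would argue directly from the level-$0$ invariant of Algorithm~\ref{alg:SW-R}: since $R_0=1$, every first-appearing representative is unconditionally admitted to $\Sa_0$ and its latest group point is stored in $A_0$, so the last point of $W$ (which lies in $W_0$) is captured in $A(\Sa_0)$ and included in $S$ with probability $1$. The rest of the argument reduces cleanly to the covering identity and the two cited facts, so this boundary verification is the only place that needs care.
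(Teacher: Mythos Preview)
Your proposal is correct and follows essentially the same approach as the paper's proof, which simply cites Fact~\ref{fact:end-p} together with the observation that $\alg_0$ includes every point in $\Sf_0$ since $R_0=1$. Your version is more explicit---identifying $c=\ell^\star$, tracing the witness $p^\star$ through the query routine to see it is included with probability $R_c/R_c=1$, and separately handling the boundary case $\ell^\star=0$---but the underlying argument is identical.
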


\begin{proof}
The lemma follows from Fact~\ref{fact:end-p}, and the fact that $\alg_0$ includes every point in $\Sf_0$ ($R_0 = 1$).
\end{proof}

Now we are ready to prove the theorem.

\begin{proof}[(for Theorem~\ref{thm:SW})]
We have the following facts:
\begin{enumerate}
\item $\Sf_0, \Sf_1, \ldots, \Sf_L$ are set of representatives of {\em disjoint} sets of groups $\G_0, \G_1, \ldots, \G_L$.  And $\cup_{\ell = 0}^L \G_\ell$ is the set of all groups who have the last points inside the sliding window.

\item By Fact~\ref{fact:subwindow} each $\Sa_\ell$ is formed by sampling each point in $\Sf_\ell$ with probability $1/R_\ell$. 

\item Each point in $\Sf_\ell$ is included in $S$ with probability $R_\ell/R_c$ (Line~\ref{line:c-6} of Algorithm~\ref{alg:SW}).

\item By Lemma~\ref{lem:sample-exist}, $\abs{S} \ge 1$ if the sliding window is not empty.  \label{item-4}

\item The final sample returned is a random sample of $S$. \label{item-5}

\item By Lemma~\ref{lem:error}, with probability $(1 - 1/m)$ the algorithm will not output ``error''. \label{item-6}
\end{enumerate}
By the first three items we know that $S$ is a random sample of the last points of all groups within the sliding window, which, combined with Item \ref{item-4}, \ref{item-5} and \ref{item-6}, give the correctness of the theorem. 

We now analyze the space and time complexities.  The space usage at each level can be bounded by $O(\log m)$ words. This is due to the fact that $\abs{\Sa_j}$ is always no more $\kappa_0 \log m$, and consequently  $A_j$ has $O(\log m)$ key-value pairs. Using Lemma~\ref{lem:Sr} we have that with probability $1 - 1/m^2$, $\abs{\Sr_j} = O(\log m)$.\footnote{We can reduce the failure probability $1/m$ to $1/m^2$ by appropriately changing the constants in the proof.}  Thus by a union bound, with probability $(1 - O(\log w/m^2))$, the total space is bounded by $O(\log w \log m)$ words since we have $O(\log w)$ levels.

For the time cost, simply note that the time spent on each point at  each level during the whole streaming process can be bounded by $O(\log m)$, and thus the amortized processing time per item can be bounded by $O(\log w \log m)$. 
\end{proof}

\subsection{Discussions}
\label{sec:discussion}

We conclude the section with some discussions and easy extensions.

\paragraph{Sampling $k$ Points with/without Replacement}
Sampling $k$ groups {\em with} replacement can be trivially achieved by running $k$ instances of the algorithm for sampling one group (Algorithm~\ref{alg:IW} or Algorithm~\ref{alg:SW}) in parallel.  For sampling $k$ groups {\em without} replacement,  we can increase the threshold at Line~\ref{line:a-5} of Algorithm~\ref{alg:IW} to $\kappa_0  k \log m$, by which we can show using exactly the same analysis in Section~\ref{sec:IW} that with probability $1 - 1/m$ we have $\abs{\Sa} \ge k$.  Similarly, for the sliding window case we can increase the threshold at Line~\ref{line:c-3} of Algorithm~\ref{alg:SW} to $\kappa_0  k \log m$.

\paragraph{Random Point As Group Representative}  We can easily augment our algorithms such that instead of always returning the (fixed) representative point of a randomly sampled group, we can return a random point of the group. In other words, we want to return each point $p \in G$ with equal probability $\frac{1}{n \cdot \abs{G}}$. 

For the infinite window case we can simply plug-in the classical Reservoir sampling~\cite{Vitter85} in Algorithm~\ref{alg:IW}.  We can implement this as follows: For each group $G$ that has a point stored in $\Sa \cup \Sr$, we maintain an $e_G = (v, ct)$ pair where $ct$ is a counter counting the number of points of this group, and $v$ is the random representative point.  At the beginning (when the first point $u$ of group $G$ comes) we set $e_G = (u, 1)$. When a new point $p$ is inserted, if there exists $u \in \Sa$ such that $d(u, p) \le \alpha$ (i.e., $u$ and $p$ are in the same group), we increment the counter $ct$ for group $G(u)$, and reset $e_G = (u, p)$ with probability $\frac{1}{ct}$.  For the sliding window case, we can just replace Reservoir sampling with a random sampling algorithm for sliding windows (e.g., the one in \cite{BOZ09}).


\section{General Datasets}
\label{sec:general}

In this section we consider general datasets which may {\em not} be well-separated, and consequently there is {\em no} natural partition of groups. 
However, we show that Algorithm~\ref{alg:IW} still gives the following guarantee.

%

\begin{theorem}
\label{thm:ambiguity}
For a general dataset $S$ in constant dimensional Euclidean space, there exists a streaming algorithm (Algorithm~\ref{alg:IW}) that with probability $1 - 1/m$, at any time step, it outputs a point $q$ satisfying Equality~(\ref{eq:def-2}), that is, 
\begin{equation*} 
\textstyle
\forall p \in S, \Pr[q \in \ball(p, \alpha) \cap S] = \Theta(\frac{1}{F_0(S, \alpha)}),
\end{equation*}
where $\ball(p, \alpha)$ is the ball centered at $p$ with radius $\alpha$.  
\end{theorem}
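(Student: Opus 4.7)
The plan is to mimic the three-step sketch the authors hint at: (i) associate to the stream an auxiliary greedy partition whose representatives are exactly what Algorithm~\ref{alg:IW} secretly samples from, (ii) show the algorithm returns a uniformly random such representative, and (iii) verify that this partition has $\Theta(F_0(S,\alpha))$ groups so that the target probability comes out to $\Theta(1/F_0(S,\alpha))$.

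To define the greedy partition, I process the stream in arrival order while maintaining a set of greedy representatives: the next point $p$ becomes a new representative if $d(p,r)>\alpha$ for every earlier representative $r$, and otherwise joins the group of the earliest representative within distance $\alpha$. Call the resulting groups $G_1,\dots,G_{\sol}$ with first points $r_1,\dots,r_{\sol}$. By construction the $r_j$'s form an $\alpha$-packing, and every $p\in S$ lies within distance $\alpha$ of the representative of its own greedy group.

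The heart of the argument is the claim that at any fixed time $\Sa=\{r_j:h_R(\cell(r_j))=0\}$, so $\Sa$ is an independent rate-$1/R$ subsample of the $\sol$ greedy representatives. One direction is immediate: each $r_j$ passes the Line~\ref{line:a-2} test on arrival because all earlier representatives are at distance $>\alpha$ from it. For the converse, suppose some non-representative $p=p_i^{(j)}\in G_j$ (with $i\ge 2$) landed in $\Sa$, i.e.\ $h_R(\cell(p))=0$. Since $d(r_j,p)\le\alpha$ we have $\cell(p)\in\adj(r_j)$; hence $\adj(r_j)$ contains a sampled cell, so $r_j$ was placed in $\Sa\cup\Sr$ when it arrived, and then the Line~\ref{line:a-2} test for $p$ would have triggered on the witness $r_j$ and skipped $p$, a contradiction. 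This containment is preserved under each doubling of $R$ and the associated refresh of $\Sa,\Sr$, so the claim persists throughout the stream. Together with Lemma~\ref{lem:0} (to guarantee $|\Sa|\ge 1$) and Lemma~\ref{lem:Sr} (to bound the space), the uniform random choice from $\Sa$ is a uniform random greedy representative with failure probability at most $1/m$. I expect this containment $\cell(p_i^{(j)})\in\adj(r_j)$ to be the decisive geometric fact and the main obstacle, since without it the correlations between cell-sampling and neighborhood-sampling could let non-representatives sneak into $\Sa$ and destroy uniformity.

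To finish, I equate $\sol$ with $F_0(S,\alpha)$ up to constants by constant-dimensional packing and covering. Each minimum-cardinality group has diameter $\le\alpha$ and so contains only $O(1)$ pairwise-$>\alpha$-separated points, giving $\sol=O(F_0(S,\alpha))$. Conversely each greedy group has diameter $\le 2\alpha$ and fits inside a ball of radius $\alpha$; such a ball in $\mathbb{R}^d$ with $d=O(1)$ can be covered by $O(1)$ balls of radius $\alpha/2$, and intersecting these sub-balls with $S$ refines the greedy partition into an $\alpha$-diameter partition of $S$ of size $O(\sol)$, so $F_0(S,\alpha)=O(\sol)$. Now for any $p\in S$, $\Pr[q\in\ball(p,\alpha)\cap S]=k(p)/\sol$ where $k(p)$ is the number of greedy representatives within distance $\alpha$ of $p$; the representative of $p$'s own greedy group gives $k(p)\ge 1$, and a packing argument inside $\ball(p,\alpha)$ yields $k(p)=O(1)$. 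Combining everything, $\Pr[q\in\ball(p,\alpha)\cap S]=\Theta(1/\sol)=\Theta(1/F_0(S,\alpha))$, as claimed.
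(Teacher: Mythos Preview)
Your central claim---that $\Sa$ equals the hash-sampled subset of your deterministic, stream-order greedy representatives $r_1,\dots,r_{\sol}$---is false, and the failure is in the direction you treat as easy: it is \emph{not} true that every $r_j$ passes the Line~\ref{line:a-2} test. The reason is that $\Sr$ can pick up non-representatives, and such a point can then block a later greedy representative. Concretely, take $\alpha=1$, grid side $\tfrac12$, and four collinear points $p_1=0$, $p_2=0.9$, $p_3=1.8$, $p_4=2.7$. Your greedy representatives are $r_1=p_1$ and $r_2=p_3$, with $p_2\in G_1$ and $p_4\in G_2$. Suppose the sampled cells are exactly $\cell(p_3)$ and $\cell(p_4)$. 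Then $\adj(p_1)$ contains neither sampled cell (the nearest point of $\cell(p_3)$ is at distance $1.5>\alpha$ from $p_1$), so $p_1$ is ignored entirely. But $\cell(p_3)\in\adj(p_2)$, so the non-representative $p_2$ enters $\Sr$. Now $r_2=p_3$ arrives, sees $p_2\in\Sr$ at distance $0.9\le\alpha$, and is \emph{skipped}---even though $h_R(\cell(r_2))=0$. Finally $p_4$ arrives, sees only $p_2$ at distance $1.8>\alpha$, is not skipped, and enters $\Sa$ since its own cell is sampled. Thus $\Sa=\{p_4\}$: it contains a non-representative and omits the representative $r_2$ whose cell was sampled. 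The implicit invariant you rely on (``$\Sa\cup\Sr$ contains only earlier representatives when $r_j$ arrives'') is exactly what fails; note that a sampled cell $C\in\adj(p)$ need not lie in $\adj(r_j)$ even when $d(p,r_j)\le\alpha$, since triangle inequality only gives $d(r_j,C)\le 2\alpha$.

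The paper sidesteps this by defining the greedy partition \emph{after the fact}, seeded by the actual contents of $\Sa$: the points $q_1,\dots,q_z$ that ended up in $\Sa$ (in arrival order) are taken as the first $z$ greedy centers, and only then is the partition extended greedily over the remaining points. This partition is hash-dependent, so by construction $\Sa$ is exactly its first $z$ representatives; the paper then argues, via the fact that each greedy group touches $\Theta(1)$ cells and Lemma~\ref{lem:greedy}, that the output is $\Theta(1/\sol)$-uniform over these groups and that $\sol=\Theta(\opt)$. Your step~(iii)---the packing/covering for $\sol=\Theta(F_0(S,\alpha))$ and the final $k(p)=\Theta(1)$ bound---is fine and mirrors the paper; the breakdown is solely in step~(ii), where you need the hash-dependent partition rather than the deterministic one.
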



Before proving the theorem, we first study group partitions generated by a greedy process.

\begin{definition}[Greedy Partition]
\label{def:greedy}
Given a dataset $S$, a greedy partition is generated by the following process: pick an arbitrary point $p \in S$, create a new group $G(p) \gets \ball(p, \alpha) \cap S$ and update $S \gets S \backslash G(p)$; repeat this process until $S = \emptyset$.
\end{definition}

\begin{lemma}
\label{lem:greedy}
Given a dataset $S$, let $\opt$ be the number of groups in the minimum cardinality partition of $S$, and $\sol$ be the number of groups in an {\em arbitrary} greedy partition.  We always have $\opt = \Theta(\sol)$.
\end{lemma}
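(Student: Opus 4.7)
The plan is to prove the two inequalities $n_{\mathtt{gdy}} \le n_{\mathtt{opt}}$ and $n_{\mathtt{opt}} \le c \cdot n_{\mathtt{gdy}}$ separately, where $c$ is a constant depending only on the ambient dimension.

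First I would record the key structural property of the greedy process. Let $p_1,\ldots,p_{n_{\mathtt{gdy}}}$ be the centers of the groups produced by the greedy partition, listed in the order they were chosen. By construction, at the moment $p_j$ is chosen it was not in $\ball(p_i,\alpha)$ for any earlier $i<j$ (otherwise it would have already been removed), so $d(p_i,p_j) > \alpha$ for all $i \neq j$. Hence the set of greedy centers is an $\alpha$-separated point set of size $n_{\mathtt{gdy}}$.

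For the direction $n_{\mathtt{gdy}} \le n_{\mathtt{opt}}$, consider any minimum cardinality partition $\mathcal{G}^* = \{G_1^*,\ldots,G_{n_{\mathtt{opt}}}^*\}$. By Definition~\ref{def:F0} every $G_i^*$ has diameter at most $\alpha$. Therefore no $G_i^*$ can contain two distinct greedy centers $p_i, p_j$, since $d(p_i,p_j) > \alpha$. This forces $n_{\mathtt{gdy}} \le n_{\mathtt{opt}}$.

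For the direction $n_{\mathtt{opt}} \le c \cdot n_{\mathtt{gdy}}$, I would exhibit a valid partition of $S$ (i.e., one where each part has diameter at most $\alpha$) whose cardinality is $O(n_{\mathtt{gdy}})$; since $n_{\mathtt{opt}}$ is the minimum such cardinality, this suffices. The idea is to refine each greedy group. Each greedy group $G(p_i) \subseteq \ball(p_i,\alpha)$. A standard packing/covering argument in constant-dimensional Euclidean space shows that $\ball(p_i,\alpha)$ can be covered by at most $\kappa$ sets of diameter $\le \alpha$ (for example, by intersecting with a grid of side length $\alpha/\sqrt{d}$; in two dimensions $\kappa$ is a small constant). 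Intersecting this cover with $G(p_i)$ and then arbitrarily assigning overlap points gives a partition of $G(p_i)$ into at most $\kappa$ nonempty pieces each of diameter $\le \alpha$. Doing this for every greedy group produces a valid partition of $S$ of size at most $\kappa \cdot n_{\mathtt{gdy}}$, hence $n_{\mathtt{opt}} \le \kappa \cdot n_{\mathtt{gdy}}$.

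The only nontrivial step is the covering estimate in the second direction; everything else is a direct consequence of the definitions. In constant dimension this is a routine fact (and explains why the theorem is stated for constant-dimensional Euclidean space), so I do not anticipate a real obstacle, only the bookkeeping required to state a clean value of $\kappa$.
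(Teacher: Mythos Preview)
Your proposal is correct, and in both directions it takes a cleaner route than the paper's own proof.

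For $n_{\mathtt{gdy}}\le n_{\mathtt{opt}}$, the paper argues by induction that the first $i$ greedy groups cover $i$ distinct groups of the optimal partition. Your argument is more direct: the greedy centers are pairwise $>\alpha$ apart, so by pigeonhole no optimal group (diameter $\le\alpha$) can contain two of them. This is shorter and isolates exactly the property of the greedy process that is being used.

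For $n_{\mathtt{opt}}\le c\cdot n_{\mathtt{gdy}}$, the paper asserts that each greedy group can intersect at most $O(1)$ groups of the \emph{minimum} partition and concludes from there. That assertion is true but not entirely immediate: one needs to observe that any optimal group meeting $G(p_i)$ lies inside $\ball(p_i,2\alpha)$, and then invoke minimality together with a covering of $\ball(p_i,2\alpha)$ by $O(1)$ diameter-$\alpha$ sets to bound how many such optimal groups there can be. Your approach sidesteps this by directly refining each greedy group into at most $\kappa$ diameter-$\alpha$ pieces and comparing the resulting partition to the optimum; this uses only the definition of $n_{\mathtt{opt}}$ as a minimum and the same covering fact, without needing to reason about the structure of the optimal partition itself. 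Both arguments ultimately rest on the same constant-dimension covering bound, but yours is the more elementary packaging.
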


\begin{proof}
We first show $\sol \le \opt$. Let $G(p_1), \ldots, G(p_{\sol})$ be the groups in the greedy partition according to the orders they were created, and let $H_1, \ldots, H_{\opt}$ be the minimum partition.  

We prove by induction.  First it is easy to see that $G(p_1)$ must cover the group containing $p_1$ in the minimum partition (w.l.o.g. denote that group by $H_1$).  Suppose that
$\bigcup_{j=1}^{i} G(p_j)$ covers $i$ groups $H_1, \ldots, H_i$ in the minimum partition, that is, $\bigcup_{j=1}^i H_j \subseteq \bigcup_{j=1}^i G(p_j)$, we can show that there must be a new group $H_{i+1}$ in the minimum partition such that  $\bigcup_{j=1}^{i+1} H_j \subseteq \bigcup_{j=1}^{i+1} G(p_j)$, which gives $\sol \le \opt$.   The induction step follows from the following facts. 
\begin{enumerate}
\item $p_{i+1} \not\in \bigcup_{j=1}^{i} G(p_j)$. 

\item $\ball(p_{i+1}, \alpha) \subseteq \bigcup_{j=1}^{i+1} G(p_j)$.

\item The diameter of each group in the minimum partition is at most $\alpha$. 
\end{enumerate}
Indeed, by (1) and the induction hypothesis we have $p_{i+1} \not\in \bigcup_{j=1}^{i} H_j$.  Let $H_{i+1}$ be the group containing $p_{i+1}$ in the minimum partition.  Then by (2) and (3) we must have $H_{i+1} \subseteq \ball(p_{i+1}, \alpha) \subseteq \bigcup_{j=1}^{i+1} G(p_j)$.

We next show $\opt \le O(\sol)$. This is not obvious since the diameter of a group in the greedy partition may be larger than $\alpha$ (but is at most $2\alpha$), while groups in the minimum partition have diameter at most $\alpha$.  However, in constant dimensional Euclidean space, each group in a greedy group partition can intersect at most $O(1)$ groups in the minimum cardinality partition. We thus still have $\opt \le O(\sol)$.
\end{proof}

Now we are ready to prove the theorem.

\begin{proof} [(for Theorem~\ref{thm:ambiguity})]
We can think the group partition in Algorithm~\ref{alg:IW} as a greedy process.  Let $(q_1, \ldots, q_z)$ be the sequence of points that are included in $\Sa$, according to their arriving orders in the stream.  We can generate a greedy group partition on $\bigcup_{i=1}^z \ball(q_i, \alpha)$ as follows: for $i = 1, \ldots, z$, create a new group $G(q_i) \gets \ball(q_i, \alpha) \cap S$ and update $S \gets S \backslash G(q_i)$.  
Let $\Gsub = \{G(q_1), \ldots, G(q_z)\}$. We then apply the greedy partition process on the remaining points in $S$, again according to their arriving orders in the stream. Let $q_{z+1}, \ldots, q_{\sol}$ be the representative points of the remaining groups. Let $\G = \{G(q_1), \ldots, G(q_{\sol})\}$ be the final group partition of $S$. We have the following facts.
\begin{enumerate}
\item  Each group in $\G$ intersects $\Theta(1)$ grid cell in $\Grid$.

\item Each grid cell in $\Grid$ is sampled by the hash function $h_R$ with equal probability.

\item $q_1, \ldots, q_z$ are the representative points of their groups in $\Gsub$.

\item Algorithm~\ref{alg:IW} returns a sample randomly from $q_1, \ldots, q_z$.
\end{enumerate}
By items $1$ and $2$, we know that each group in $\G$ is included in $\Gsub$ with probability $\Theta(\abs{\Gsub}/\abs{\G})$.  By items $3$ and $4$, we know that Algorithm~\ref{alg:IW} returns a random group from $\Gsub$. Therefore each group $G \in \G$ is sampled by Algorithm~\ref{alg:IW} with probability $\Theta(1/\sol) = \Theta(1/\opt)$, where the last equation is due to Lemma~\ref{lem:greedy}.  

Now for any $p \in S$, according to the greedy process and Algorithm~\ref{alg:IW}, there must be some $q \in S$ such that $G(p) \subseteq \ball(q, \alpha)$, {\em and} if $G(p)$ is sampled then $q$ is the sampled point.  So the probability that $q$ is sampled is at least the probability that $G(p)$ is sampled.  Finally, note that if $p \in \ball(q, \alpha)$ then we also have $q \in \ball(p, \alpha)$.  We thus have 
\begin{equation}
\label{eq:e-1}
\Pr[\exists\ q \in \ball(p, \alpha) \text{ s.t. $q$ is sampled}] = \Omega(1/{\opt}).
\end{equation}

On the other hand,  in constant dimensional Euclidean space $\ball(p, \alpha)$ can only intersect $O(1)$ groups in the greedy partition. We thus also have
\begin{equation}
\label{eq:e-2}
\Pr[\exists\ q \in \ball(p, \alpha) \text{ s.t. $q$ is sampled}] = O(1/{\opt}).
\end{equation}

The theorem follows from (\ref{eq:e-1}) and (\ref{eq:e-2}).  
\end{proof}

It is easy to see that the above arguments can also be applied to the sliding window case with respect to Algorithm~\ref{alg:SW}.

\begin{corollary}
For a general dataset in constant dimensional Euclidean space, there exists a sliding window algorithm (Algorithm~\ref{alg:SW}) that with probability $1 - 1/m$,  at any time step,  it outputs a point $q$ such that $\forall p \in S,\ \Pr[q \in \ball(p, \alpha)] = \Theta(1/{\opt})$, where $S$ is the set of all the points in the sliding window, and $\opt$ is the size of the minimum cardinality partition of $S$ with group radius $\alpha$.
\end{corollary}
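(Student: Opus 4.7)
The plan is to lift the proof of Theorem~\ref{thm:ambiguity} to the sliding window setting, using the decomposition of the current window into subwindows $W_0,W_1,\ldots,W_L$ from Definition~\ref{def:subwindow} together with Fact~\ref{fact:subwindow}. Fix a query time and let $S$ be the set of points in the current window. For each level $\ell$, let $(q^{(\ell)}_1,q^{(\ell)}_2,\ldots)$ be the representatives in $\Sf_\ell$ ordered by arrival within $W_\ell$. Concatenating these across levels $\ell=0,1,\ldots,L$ yields a global ordering $(q_1,q_2,\ldots)$ of representative points covering every group whose last point lies in the current window. As in the proof of Theorem~\ref{thm:ambiguity}, we greedily carve $S$ into groups $G(q_i)\gets\ball(q_i,\alpha)\cap S$ in this order, and then continue greedily on the remaining points; call the resulting partition $\G$ and the subsequence surviving to output $\Gsub$.

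Next I would verify that each group in $\G$ is included in $\Gsub$ with probability $\Theta(|\Gsub|/|\G|)$. Two ingredients give this. First, by Fact~\ref{fact:subwindow}, $\Sa_\ell$ samples each point of $\Sf_\ell$ independently with probability $1/R_\ell$, and Line~\ref{line:c-6} of Algorithm~\ref{alg:SW} then sub-samples $\Sa_\ell$ with probability $R_\ell/R_c$; so at query time every representative across all levels is included in the candidate set $S$ (from Line~\ref{line:c-6}) with the \emph{same} probability $1/R_c$, determined by whether its grid cell hashes to $0$ under $h_{R_c}$. Second, in constant dimension each group intersects only $O(1)$ grid cells, so the ``cell-level'' uniform hash translates into a group-level sampling rate of $\Theta(1/R_c)$. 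Combining gives that each $G\in\G$ survives to $\Gsub$ with probability $\Theta(|\Gsub|/|\G|)$, exactly as in the proof of Theorem~\ref{thm:ambiguity}.

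Applying Lemma~\ref{lem:greedy} to $S$ yields $|\G|=\Theta(\opt)$, and Algorithm~\ref{alg:SW} returns a uniformly random point of $\Gsub$, so each group $G\in\G$ is the sampled group with probability $\Theta(1/\opt)$. By Lemma~\ref{lem:sample-exist}, $|S|\ge 1$ at query time with high probability, so this conditional statement is meaningful. For any fixed $p\in S$, the greedy construction ensures some $q_i$ with $p\in G(q_i)\subseteq\ball(q_i,\alpha)$, hence $q_i\in\ball(p,\alpha)$; this gives $\Pr[q\in\ball(p,\alpha)]=\Omega(1/\opt)$. Conversely, $\ball(p,\alpha)$ can meet only $O(1)$ greedy groups in constant dimension, yielding $\Pr[q\in\ball(p,\alpha)]=O(1/\opt)$. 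The failure probability $1/m$ comes from combining Theorem~\ref{thm:SW}'s correctness bound with Lemma~\ref{lem:sample-exist} via a union bound.

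The main subtlety, and the step I would most carefully check, is that the randomness across levels meshes correctly: because different levels use nested hash functions $h_{R_\ell}$ and the final post-sampling step (Line~\ref{line:c-6}) uses a rate $R_\ell/R_c$, a representative in $\Sf_\ell$ is ultimately in the output pool iff its grid cell is sampled by $h_{R_c}$. Thus the ``cell uniformity'' argument of Theorem~\ref{thm:ambiguity} applies to all surviving representatives simultaneously, even though they came from different subwindows, and the per-group sampling probability is indeed uniform. Once this is confirmed, the rest of the argument is a direct translation of the infinite-window proof.
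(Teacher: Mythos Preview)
The paper does not actually prove this corollary; it simply asserts that ``the above arguments can also be applied to the sliding window case with respect to Algorithm~\ref{alg:SW}.'' Your proposal is precisely the natural way to carry out that assertion, and the overall skeleton --- interpret the algorithm's stored points as the first centers of a greedy partition, invoke Lemma~\ref{lem:greedy} to relate the greedy size to $\opt$, and then argue the $\Omega(1/\opt)$ and $O(1/\opt)$ bounds exactly as in Theorem~\ref{thm:ambiguity} --- is correct and matches the paper's intent.

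One point to tighten: you build the greedy partition by concatenating the representative sets $\Sf_\ell$ across levels, but $\Sf_\ell$ is defined (in Fact~\ref{fact:subwindow} and the surrounding text) as the set of representatives of \emph{groups} whose last points lie in $W_\ell$, and for a general dataset there is no hash-independent notion of ``group'' or ``representative.'' In the infinite-window proof of Theorem~\ref{thm:ambiguity} the greedy centers $q_1,\ldots,q_z$ are taken to be the points actually sitting in $\Sa$ at query time, and the partition is then extended greedily over the remaining stream points; you should do the same here, taking as centers the points in $\bigcup_\ell \Sa_\ell$ (or, after Line~\ref{line:c-6}, the points surviving into the output pool) rather than the abstractly defined $\Sf_\ell$. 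With that adjustment your ``cell-uniformity'' observation goes through: whether Line~\ref{line:c-6} is implemented via fresh coins or via the nested hash $h_{R_c}$, each stored representative lands in the output pool with probability $1/R_c$, and since distinct greedy centers lie in distinct cells (their pairwise distance exceeds $\alpha$ while a cell has diameter $<\alpha$), the per-group sampling rate is $\Theta(1/R_c)$ as needed.
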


\section{High Dimensions}
\label{sec:highD}

In this section we consider datasets in $d$-dimensional Euclidean space for general $d$.  We show that Algorithm~\ref{alg:IW}, with some small modifications, can handle $(\alpha, \beta)$-sparse dataset in $d$-dimensional Euclidean space with $\beta > d^{1.5} \alpha$ as well. 


\begin{theorem}
\label{thm:highD}
In the $d$-dimensional Euclidean space, for an $(\alpha, \beta)$-sparse dataset with $\beta > d^{1.5} \alpha$, there is a streaming algorithm such that with probability $1 - 1/m$, at any time step, it outputs a robust $\ell_0$-sample. The algorithm uses $O(d \log m)$ words of space and $O(d \log m)$ processing time per item.
\end{theorem}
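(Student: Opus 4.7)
The plan is to instantiate Algorithm~\ref{alg:IW} in $\mathbb{R}^d$ with a $d$-dimensional axis-aligned grid of cell side length $s$, shifted by a uniformly random offset in $[0,s]^d$. Two constraints guide the choice of $s$. First, the cell diameter $s\sqrt{d}$ must be strictly less than $\beta$, so that any two points inside a single cell are at distance $\le s\sqrt{d}<\beta$ and, by $(\alpha,\beta)$-sparsity, must belong to the same group (hence each cell intersects at most one group). Second, $\alpha/s$ must be $O(1/d)$, so that the ball of radius $\alpha$ around any point---which already contains that point's entire group---rarely escapes its own cell in any coordinate. The hypothesis $\beta>d^{1.5}\alpha$ leaves a nonempty window for $s$; I would take $s=\beta/\sqrt{d}$, which is at least $d\alpha$. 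With this choice the definitions of \cell, \adj, the accept set $\Sa$, and the reject set $\Sr$ from Section~\ref{sec:IW} carry over verbatim, as does the main loop of Algorithm~\ref{alg:IW}.

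Given this setup, correctness and the $|\Sa|\ge 1$ guarantee transfer directly. Because each cell hosts at most one group, the representative point of every group remains uniquely determined by the stream, and the test $h_R(\cell(p))=0$ makes $\Sa$ a uniform $1/R$-subsample of the representatives; Lemma~\ref{lem:0}'s argument (doubling of $R$ combined with a union bound over the at most $m$ doublings) carries through unchanged. The entire burden of the proof therefore falls on replaying Lemma~\ref{lem:Sr} to show $|\Sa\cup\Sr|=O(\log m)$, from which one can read off the claimed $O(d\log m)$ words of storage (each stored point uses $O(d)$ words for its coordinates) and $O(d\log m)$ processing time per arrival (an $O(d)$ distance check against each of $O(\log m)$ stored representatives).

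The main obstacle I anticipate is that a deterministic grid in $d$ dimensions already admits adjacency sets of size up to $2^d$ even when $\alpha\ll s$, which would blow $|\Sr|$ up exponentially in $d$; this is exactly where the random shift is essential. Under a uniform shift, for each coordinate $i$ the event $E_i$ that the ball $\ball(p,\alpha)$ crosses a cell boundary along direction $i$ is an independent Bernoulli with success probability $2\alpha/s = O(1/d)$, so $|\adj(p)|\le\prod_{i=1}^d(1+\mathbbm{1}[E_i])$ and
\[
\bE[|\adj(p)|]\;\le\;(1+2\alpha/s)^d\;\le\;\bigl(1+O(1/d)\bigr)^d\;=\;O(1).
\]
Substituting this expected bound into the calculation of Lemma~\ref{lem:Sr} (using that hash values of distinct cells are independent) yields $\bE[|\Sr|]=O(\bE[|\Sa|])$, and the same Chernoff-style concentration argument, followed by a union bound over the at most $m$ updates, upgrades this to $|\Sa\cup\Sr|=O(\log m)$ with probability $1-1/m$. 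The remainder of the analysis mirrors Section~\ref{sec:IW} step-for-step.
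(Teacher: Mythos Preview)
Your proposal is correct and follows essentially the same route as the paper. The paper takes cell side length $d\alpha$ (so cell diameter $d^{1.5}\alpha<\beta$), observes that the $\alpha$-neighborhood of a group is cut along each coordinate independently with probability $O(1/d)$, and bounds $\Pr[p\in\Sa\cup\Sr]$ by $\sum_i\binom{d}{i}\mu^i(1-\mu)^{d-i}2^i/R=(1+\mu)^d/R=O(1/R)$ with $\mu=3/d$; this is exactly your $(1+O(1/d))^d=O(1)$ calculation phrased in terms of $\Pr[p\in\Sr]\le\kappa_1\Pr[p\in\Sa\cup\Sr]$ rather than $\bE[|\adj(p)|]$, after which both you and the paper defer to the Lemma~\ref{lem:Sr} argument. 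One cosmetic point: with $s=\beta/\sqrt{d}$ the cell diameter equals $\beta$ rather than being strictly below it, so to match your stated constraint you should take $s$ slightly smaller (or just use the paper's $s=d\alpha$, which already gives $\alpha/s=1/d$).
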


\begin{remark}
We can use Johnson-Lindenstrauss dimension reduction to weaken the sparsity assumption to $\beta \ge c_\alpha \log^{1.5} m \cdot \alpha$ for some large enough constant $c_\alpha$.
\end{remark}

We place a random grid $\Grid$ with side length $d \alpha$.  Since the dataset is $(\alpha, \beta)$-sparse with $\beta > d^{1.5} \alpha$, each grid cell can intersect at most one group.  However, in the $d$-dimensional space a group can intersect $2^d$ grid cells in the worst case, which may cause difficulty to maintain $\Sr$ in small space -- in the worst case we would have $\abs{\Sr} = \Omega(2^d)$ while $\abs{\Sa}$ is still small.  Fortunately, in the following lemma we show that for any $p \in \Sf$, the probability that $p \in \Sr$ will not be too large compared with the probability that $p \in \Sa$.

\begin{lemma}
\label{lem:small-Sr}
For any fixed $p \in \Sf$, we have
$$ \Pr[p \in \Sr] \le \kappa_1 \cdot \Pr[p \in \Sa \cup \Sr],
$$
where $\kappa_1 \in (0, 1)$ is a constant.
\end{lemma}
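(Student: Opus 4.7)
The plan is to establish the stronger inequality $\Pr[p \in \Sa \cup \Sr] \le c \cdot \Pr[p \in \Sa]$ for some absolute constant $c$, and then deduce the lemma via $\Pr[p \in \Sr] = \Pr[p \in \Sa \cup \Sr] - \Pr[p \in \Sa]$, yielding $\kappa_1 = 1 - 1/c \in (0,1)$. First I would observe that $\Pr[p \in \Sa] = 1/R$ exactly, independent of the random grid, since $h_R(\cell(p)) = 0$ with probability $1/R$. Using independence of $h_R$ across distinct cells together with Bernoulli's inequality,
\[
\Pr[p \in \Sa \cup \Sr] \;=\; \bE_{\Grid}\!\Bigl[1 - (1-1/R)^{|\adj(p)|}\Bigr] \;\le\; \frac{\bE_{\Grid}[|\adj(p)|]}{R},
\]
so everything reduces to bounding $\bE_{\Grid}[|\adj(p)|]$ by a constant that does not depend on $d$.

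To compute this expectation, I would enumerate the cells near $\cell(p)$ by direction vectors $\chi \in \{-1,0,+1\}^d$, where $C_\chi$ is the cell obtained by shifting $\cell(p)$ one step along each axis $i$ with $\chi_i \ne 0$. The random grid shift makes $p$'s distance to any fixed face of $\cell(p)$ uniform on $[0, d\alpha]$, and these distances are independent across dimensions. Consequently, for $\chi$ of weight $k$, the squared distance from $p$ to $C_\chi$ equals $\sum_{i=1}^k Z_i^2$ with $Z_i$ i.i.d.\ uniform on $[0, d\alpha]$. Since $d \ge 1$, the ball of radius $\alpha$ around the origin fits inside $[0,d\alpha]^k$, so
\[
\Pr[C_\chi \in \adj(p)] \;=\; \Pr\Bigl[\textstyle\sum_{i=1}^k Z_i^2 \le \alpha^2\Bigr] \;=\; \frac{V_k \alpha^k / 2^k}{(d\alpha)^k} \;=\; \frac{V_k}{(2d)^k},
\]
where $V_k = \pi^{k/2}/\Gamma(k/2+1)$ is the volume of the unit $k$-ball and the factor $1/2^k$ accounts for restriction to the positive orthant.

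Summing over $\chi$, and using that there are $\binom{d}{k} 2^k$ direction vectors of weight $k$ together with $\binom{d}{k} \le d^k/k!$,
\[
\bE_{\Grid}[|\adj(p)|] \;=\; 1 + \sum_{k=1}^d \binom{d}{k} 2^k \cdot \frac{V_k}{(2d)^k} \;\le\; 1 + \sum_{k=1}^{\infty} \frac{V_k}{k!}.
\]
The remaining task is purely numerical: a Stirling estimate yields $V_k \le (2\pi e/k)^{k/2}$, so $V_k/k!$ decays super-exponentially and the series converges to some finite $c-1$ independent of $d$. The main obstacle, such as it is, is keeping the two independent sources of randomness (the grid shift versus the hash function $h_R$) cleanly separated in the bookkeeping, so that $\Pr[p \in \Sa] = 1/R$ truly holds unconditionally on the grid and the union bound can be taken in expectation as above; once that is in place, the dimension-free constant $c$ delivers $\kappa_1 = 1 - 1/c$.
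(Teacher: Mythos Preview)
Your proof is correct and follows the same high-level strategy as the paper: show $\Pr[p \in \Sa] = 1/R$ exactly, then bound $\Pr[p \in \Sa \cup \Sr] \le \bE_{\Grid}[|\adj(p)|]/R$ via a union bound over the cells in $\adj(p)$, and finally argue that $\bE_{\Grid}[|\adj(p)|]$ is a dimension-free constant by exploiting the randomness of the grid shift. The deduction $\kappa_1 = 1 - 1/c$ is identical to the paper's.

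Where you differ is in how you bound $\bE_{\Grid}[|\adj(p)|]$. The paper observes that $\ball(G,\alpha)$ has diameter at most $3\alpha$, so along each axis it is cut by a grid hyperplane with probability at most $\mu = 3/d$; conditioned on being cut in exactly $i$ axes one has $|\adj(p)| \le 2^i$, and the binomial theorem yields $\bE_{\Grid}[|\adj(p)|] \le (1+\mu)^d = (1+3/d)^d \le e^3$. Your argument instead enumerates neighbors by direction vectors $\chi \in \{-1,0,+1\}^d$ and computes $\Pr[C_\chi \in \adj(p)]$ exactly as a ball-volume ratio $V_k/(2d)^k$, arriving at $\bE_{\Grid}[|\adj(p)|] \le 1 + \sum_{k\ge 1} V_k/k!$. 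Your route is a bit more work but gives a sharper constant (roughly $5$--$6$ versus $e^3 \approx 20$); the paper's cut-dimension argument is quicker and already suffices for the qualitative statement.
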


\begin{proof}
For a group $G$, let $\ball(G, \alpha) = \{p\ |\ d(p, G)\le \alpha\}$ where $d(p, G) = \min_{q \in G} d(p, q)$.  It is easy to see that $\ball(G, \alpha)$ has a diameter of at most $3\alpha$ because the diameter of $G$ is at most $\alpha$.  

Since the random grid has side length $d \alpha$, the probability that $\ball(G, \alpha)$ is cut by the boundaries of cells in each dimension is at most 
$\mu = \frac{3}{d}$. If $\ball(G, \alpha)$ is cut by $i$ dimensions, the number of cells it intersects is at most $2^i$, and consequently $\abs{\adj(p)} \le 2^i$ for each $p \in G$.  

Recall that each cell is sampled with probability $\frac{1}{R}$, we thus have
\begin{eqnarray*}
&& \Pr[p \in \Sr \cup \Sa] \\
&\le& \sum_{i \ge 1} \Pr[p \in \Sr \cup \Sr\ |\ |\adj(p)| = i] \cdot \Pr[|\adj(p)| = i] \\
&\le& \sum_{i=0}^d {d \choose i} \mu^i (1-\mu)^{d-i} \frac{2^i}{R} \\
&=& \frac{(2\mu + 1 - \mu)^d}{R} \\
&\le& \frac{(1+\frac{3}{d})^d}{R} 
= O\left(\frac{1}{R}\right).
\end{eqnarray*}
Since $\Sa \cap \Sr = \emptyset$, we have 
\begin{eqnarray*}
\Pr[p \in \Sr] &=& \Pr[p \in \Sr \cup \Sa] - \Pr[p \in \Sa] \\
&\le&  \kappa_1 \cdot \Pr[p \in \Sa \cup \Sr]
\end{eqnarray*}
for some constant $\kappa_1 \in (0,1)$.
\end{proof}

By Lemma~\ref{lem:small-Sr}, and basically the same analysis as that in Lemma~\ref{lem:Sr}, we can bound the space usage of Algorithm~\ref{alg:IW} by $O(d \log m)$ ($O(\log m)$ points in the $d$-dimensional space) throughout the execution of the algorithm with probability $(1 - 1/m)$, and consequently the running time.

We have a similar result for the sliding window case.
\begin{corollary}
\label{thm:highD}
In the $d$-dimensional Euclidean space, for an $(\alpha, \beta)$-sparse dataset with $\beta > d^{1.5} \alpha$, there is a sliding window algorithm such that with probability $1 - 1/m$, at any time step, it outputs a robust $\ell_0$-sample. The algorithm uses $O(d \log w \log m)$ words of space and $O(d \log w \log m)$ processing time per item, where $w$ is the size of the sliding window.
\end{corollary}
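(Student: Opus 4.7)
The plan is to lift the sliding-window algorithm of Theorem~\ref{thm:SW} into $d$ dimensions by making the same structural change that took Algorithm~\ref{alg:IW} to the high-dimensional setting in Theorem~\ref{thm:highD}: replace the underlying grid of side length $\alpha/2$ with a random grid of side length $d\alpha$. Since the dataset is $(\alpha,\beta)$-sparse with $\beta > d^{1.5}\alpha$ and a cell has diameter $d^{1.5}\alpha$, each cell still intersects at most one group, which preserves Fact~1(a) that was crucial for Algorithm~\ref{alg:SW-R} to correctly identify representative points. No change is needed to the hierarchical structure, the Split/Merge routines, or the promotion/resampling logic of Algorithm~\ref{alg:SW}; only the grid parameter (and the constants depending on $|\adj(\cdot)|$) need to be adjusted.

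With this modification, I would re-run the correctness argument of Theorem~\ref{thm:SW} verbatim: define the same subwindows $W_0,\dots,W_L$ (Definition~\ref{def:subwindow}), observe that each $\Sa_\ell$ is formed by sampling $\Sf_\ell$ at rate $1/R_\ell$ (Fact~\ref{fact:subwindow}), that the $\Sf_\ell$ partition the groups whose last points lie in $W$, and that Line~\ref{line:c-6} resamples to equalize rates. Lemma~\ref{lem:sample-exist} and Lemma~\ref{lem:error} carry over unchanged since they depend only on $R_0=1$, $R_L = w$, and the $\kappa_0\log m$ threshold, none of which interact with dimension. This gives that the returned point is a uniform robust $\ell_0$-sample with probability $1-1/m$.

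For the space and time bounds, the only step that requires care is controlling $|\Sr_\ell|$ at each level. This is where I would invoke Lemma~\ref{lem:small-Sr}: since the grid has side length $d\alpha$, for any representative $p$ we have $\Pr[p \in \Sr_\ell] \le \kappa_1 \Pr[p\in \Sa_\ell \cup \Sr_\ell]$ for a constant $\kappa_1 \in (0,1)$, independent of $d$. Plugging this bound into the Chernoff argument of Lemma~\ref{lem:Sr} (exactly as was done implicitly in the proof of Theorem~\ref{thm:highD}) yields $|\Sr_\ell| = O(\log m)$ with probability $1-1/m^2$ at each level. Because each stored point lies in $\mathbb{R}^d$ and thus costs $O(d)$ words, level $\ell$ uses $O(d\log m)$ words; a union bound over the $L = O(\log w)$ levels gives the claimed $O(d\log w \log m)$ total space, and the amortized processing time follows because the per-level work per point is $O(d\log m)$ (the dominant cost is scanning $\Sa_\ell \cup \Sr_\ell$ and computing distances in $\mathbb{R}^d$).

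The main obstacle I anticipate is purely bookkeeping: making sure that the dimension-dependent constants ($|\adj(p)| \le 2^d$ in the worst case, $O(1)$-wise intersection between groups and cells, etc.) are absorbed correctly by Lemma~\ref{lem:small-Sr} rather than showing up as an exponential $2^d$ factor in the space bound. The key insight, already isolated in the proof of Lemma~\ref{lem:small-Sr}, is that a group's $\alpha$-neighborhood is cut by each coordinate hyperplane with probability only $3/d$, so the expected number of adjacent cells stays $O(1)$ even though the worst case is $2^d$. Once this bound is invoked levelwise, the rest of the analysis is a mechanical combination of Theorem~\ref{thm:SW} and Theorem~\ref{thm:highD}.
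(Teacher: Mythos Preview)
Your proposal is correct and matches the paper's approach: the paper states this corollary without an explicit proof, treating it as the straightforward combination of the sliding-window analysis (Theorem~\ref{thm:SW}) with the high-dimensional grid modification and Lemma~\ref{lem:small-Sr}, which is precisely what you outline. Your identification of Lemma~\ref{lem:small-Sr} as the key ingredient preventing a $2^d$ blowup in $|\Sr_\ell|$, and your observation that the extra factor of $d$ arises solely from storing $d$-dimensional points, are exactly the points the paper relies on.
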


\section{Distinct Elements}
\label{sec:F0}

In this section we show that our algorithms for robust $\ell_0$-sampling can be used for approximating the number of robust distinct elements in both infinite window and sliding window settings.

\paragraph{Estimating $F_0$ in the infinite window}
We first recall the algorithm for estimating the number of distinct elements on noisefree datasets by Bar-Yossef et al.~\cite{BJKST02}. We maintain an integer $z$ initialized to be $0$, and a set $B$ consisting of at most $\kappa_B / \eps^2$ items for a large enough constant $\kappa_B$.  For each arriving point, we perform an $\ell_0$-sampling with probability ${1}/{2^z}$ and add the point to $B$ if sampled.  At the time $B > \kappa_B / \eps^2$, we update $z \gets z+1$, and re-sample each point in $B$ with probability $1/2$ so that the overall sample probability is again ${1}/{2^z}$.  It was shown in \cite{BJKST02} that with probability at least $0.9$, $\abs{B} 2^z$ approximates the robust $F_0$ up to a factor of $(1+\eps)$. We can run $\Theta(\log m)$ independent copies of above procedure to boost the success probability to $1 - 1/m$.


Now we can directly plug our $\ell_0$-sampling algorithm into the framework of \cite{BJKST02}.  We simply replace the threshold $\kappa_0 \log m$ at Line~\ref{line:a-5} of Algorithm~\ref{alg:IW} with $\kappa_B/ \eps^2$.  At the time of query we return $\abs{\Sa} \cdot R$ as the estimation of the number of distinct groups. Again, running $\Theta(\log m)$ independent copies of the algorithm and taking the median will boost the constant success probability to high probability. 

%

\paragraph{Estimating $F_0$ in the sliding windows}
We can use the ideas from Flajolet and Martin \cite{FM85} (which is known as the FM sketch). We run $\Theta(1/\eps^2)$ independent copies of our sliding window algorithm. For each of the copies, we find the largest $\ell$ that $\Sa_\ell$ includes at least one non-expired sample. We average all those $\ell$'s as $\bar{\ell}$. It follows \cite{FM85} that with probability $0.9$, $\phi 2^{\bar{\ell}}$ gives $(1 + \eps)$-approximation to the robust $F_0$ 
in the sliding window, where $\phi$ is a universal constant to correct the bias (see, e.g., \cite{FM85}). We can then run $\Theta(\log m)$ copies of above procedure and taking the median to boost the success probability.  Similarly we can also plug-in our algorithm to HyperLogLog~\cite{FFGM07}.


\newcommand{\randfive}{{\tt Rand5}}
\newcommand{\randfivepl}{{\tt Rand5-pl}}
\newcommand{\randtt}{{\tt Rand20}}
\newcommand{\randttpl}{{\tt Rand20-pl}}
\newcommand{\yacht}{{\tt Yacht}}
\newcommand{\yachtpl}{{\tt Yacht-pl}}
\newcommand{\seeds}{{\tt Seeds}}
\newcommand{\seedspl}{{\tt Seeds-pl}}
\newcommand{\bx}{\mathbf{x}}
\newcommand{\bz}{\mathbf{z}}
\newcommand{\by}{\mathbf{y}}
\newcommand{\pTime}{\texttt{pTime}}
\newcommand{\pSpace}{\texttt{pSpace}}
\newcommand{\stdDevNm}{\texttt{stdDevNm}}
\newcommand{\maxDevNm}{\texttt{maxDevNm}}
\newcommand{\searchAdj}{\textsc{SearchAdj}}
\newcommand{\nruns}{\text{\#runs}}
\section{Experiments}
\label{sec:exp}
In this section, we present our experimental results for Algorithm \ref{alg:IW} as well as some implementation details. The algorithm is easy to implement yet very efficient, and can be of interest to practitioners.

\subsection{The Setup}
\paragraph{Datasets} We verify our algorithms using the following real and synthetic datasets. 
\begin{itemize}
\item \randfive: $500$ randomly generated points  in $\mathbb{R}^5$; each coordinate is a random number from $(0, 1)$.
\item \randtt: $500$ randomly generated points in $\mathbb{R}^{20}$; each coordinate is a random number from $(0, 1)$.
\item \yacht: $308$ points taken from the UCI repository yacht hydrodynamics data set.\footnote{\url{https://archive.ics.uci.edu/ml/datasets/Yacht+Hydrodynamics}} Each point is in $\mathbb{R}^7$, and it measures the sailing yachts movement.
\item \seeds:  $210$ points taken from the UCI repository seeds data set.\footnote{\url{https://archive.ics.uci.edu/ml/datasets/seeds}} Each point is in $\mathbb{R}^8$, consisting of measurements of geometrical properties of kernels belonging to three different varieties of wheat.
\end{itemize}

We perform two types of near-duplicate generations on each dataset.   In the first transformation we generate near-duplicates as follows: We first rescale the dataset such that the minimum pairwise distance is $1$. Then for each point $\bx_i\ (i = 1, 2, \ldots, n)$, we pick a number $k_i$ uniformly at random from ${1, 2, \ldots, 100}$, and add $k_i$ near-duplicate points w.r.t. $\bx_i$, each of which is generated as follows:
\begin{enumerate}
\item Generate a vector $\bz \in \mathbb{R}^d$ such that each coordinate of $\bz$ is chosen randomly from $(0,1)$.

\item Randomly sample a number $\ell \in \left(0, \frac{1}{2 d^{1.5}}\right)$ and rescale $\bz$ to length $\ell$.  Let $\hat{\bz}$ be the resulting vector.

\item Create a near-duplicate point $\by = \bx_i + \hat{\bz}$.
\end{enumerate}
Note that each point $\bx_i$, together with the near-duplicate points around it, forms a group.
We still name the resulting datasets as \randfive, \randtt, \yacht\ and \seeds\ respectively.

In the second transformation, the number of near-duplicates we generate for each data point follows the \emph{power-law} distribution. More precisely, we randomly order the points as $\bx_1, \bx_2, \ldots, \bx_n$, and for each point $\bx_i$, we add $\lceil n \cdot i^{-1} \rceil$ noisy points in the same way as above.
We denote the resulting datasets as \randfivepl, \randttpl, \yachtpl\ and \seedspl\ respectively.

\paragraph{Measurements}
For each dataset, we run each of our proposed sampling algorithms a large number of times (denoted by $\nruns$, which ranges from $200,000$ to $500,000$), and count the number of times each group being sampled. We report the following results measuring the performance of our proposed algorithm.  
\begin{itemize}
\item \pTime: Processing time per item; measured by millisecond. The running time is tested using single thread.
\item \pSpace: Peak space usage throughout the streaming process; measured by word. 
\end{itemize}
We record \pTime\ and \pSpace\ by taking the average of 100 runs where in each run we scan the whole data stream.

The following two accuracy measurements for the $\ell_0$-sampling algorithms follow from \cite{CF14}.
\begin{itemize}
\item \stdDevNm: Let $F_0$ be the number of groups, and let $f^* = \frac{1}{F_0}$ be the target probability. We calculate the standard deviation of the empirical sampling distribution and normalize it by $f^*$.
\item \maxDevNm: We calculate the normalized maximum deviation of the empirical sampling distribution as 
$$\textstyle \max_{i}\left\{\frac{|f_i - f^*|}{f^*}\right\},$$ where $f_i$ is the empirical sampling probability of the $i$-th group.
\end{itemize}

We will also visualize the number of times each group being sampled.  

All of the eight datasets are randomly shuffled before being fed into our algorithms as data streams. We return the sample at the end of a data stream.  

\paragraph{Computational Environment}
Our algorithms are implemented in C++ and the visualization code is implemented in python+matplotlib. We run our experiments in a PowerEdge R730 server equipped with 2 x Intel Xeon E5-2667 v3 3.2GHz. This server has 8-core/16-thread per CPU, 192GB Memeory and 1.6TB SSD.

\subsection{Computing $\adj(p)$ in $\mathbb{R}^d$}
Before presenting the experimental results, we discuss some important details of our implementation. 

Recall the definition of $\adj(p)$ introduced in Section~\ref{sec:2D}:
$$\adj(p) = \{C \in \Grid\ |\ d(p, C) \le \alpha\}.$$
In Section~\ref{sec:2D} we did not spell out how to compute $\adj(p)$ in $\mathbb{R}^d$ because our discussion focused on the case $d = \Theta(1)$, where computing $\adj(p)$ only takes $O(1)$ time.  However, the naive implementation that we enumerate all the adjacent cells of $\cell(p)$ and test for each cell $C$ whether we have $d(p, C) \le \alpha$ takes $\Theta(d\cdot 3^d)$ time: we have $3^d$ cells to exam, and each cell has $d$ coordinates. This is expensive for large $d$. In the following we illustrate how to compute $\adj(p)$ efficiently in practice.

According to Lemma \ref{lem:small-Sr}, the expected size of $\adj(p)$ is always bounded by a small constant given a sufficiently large separation ratio. Therefore it is possible to compute $\adj(p)$ more efficiently if we can avoid the exhaustive enumeration of the adjacent cells of $\cell(p)$. We approach this by effectively pruning cells that are impossible to meet the requirement that $d(p, C) \le \alpha$.

To simplify the presentation, we set the side length of the grid $\mathbb{G}$ to be $1$ and rescale the value $\alpha$ accordingly. We further identify the grid with the Cartesian coordinates with the origin $\mathbf{0} = (0, \ldots, 0)$.   Consider a point $p = (x_1, x_2, \ldots, x_d)$. To calculate the distance from $p$ to a cell $C$ that is adjacent to $\cell(p)$, we move $p$ to the nearest point in $C$ and record the distance being moved. The movement can be done sequentially: first in the direction of $x_1$, and then in the direction of $x_2$, $\ldots$ until $x_d$.

Figure \ref{fig:nearest-points} gives an illustration of the nearest points of $p$  in $\mathbb{R}^2$. To enumerate all those nearest points, we iterate the coordinates of $p$ from $x_1$ to $x_d$. For $x_i$, we have three options: (1) move to $\lfloor x_i \rfloor$; (2) move to $\lceil x_i \rceil$; and (3) do not move. Consequently we have $3^d$ different nearest points, and thus $3^d$ different cells which are  the cells adjacent to $\cell(p)$ including $\cell(p)$ itself. 
We then perform the enumeration using a DFS search and prune the search as long as the accumulated distance exceeds $\alpha$.  The details of this procedure are presented in Algorithm \ref{alg:searchAdj} and Algorithm \ref{alg:adj}.  

\begin{figure}
  \centering
  \includegraphics[width=0.28\textwidth]{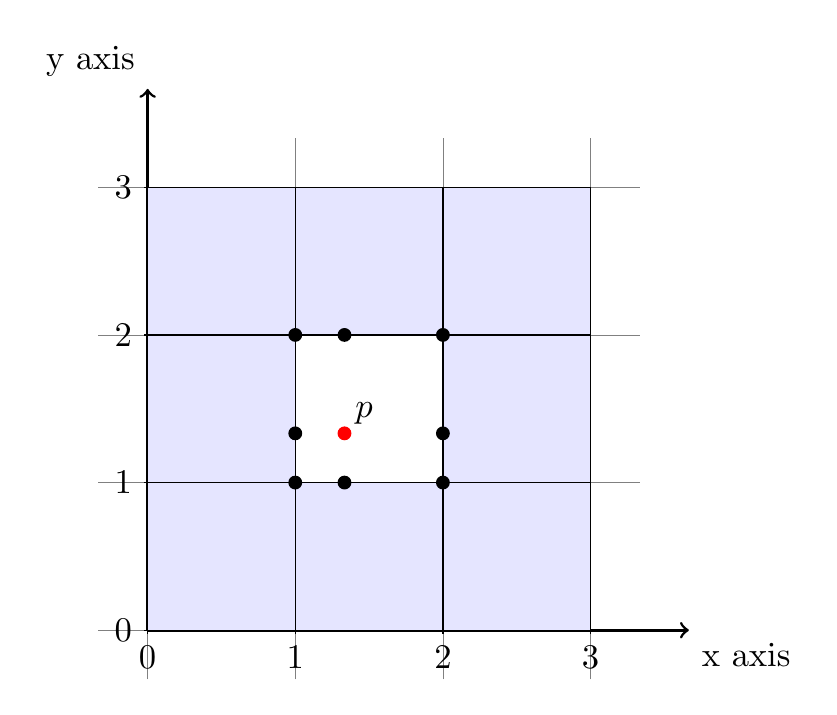}
  \caption{Nearest points of $p = (x_1, x_2) \in \mathbb{R}^2$ (red point). The eight light blue cells are cells adjacent to $\cell(p)$. There are eight black points, each of which is the nearest point from $p$ in the corresponding cell} 
  \label{fig:nearest-points}
\end{figure}

\begin{figure}[t]
     \centering
     \includegraphics[height=0.17\textwidth,width=0.22\textwidth]{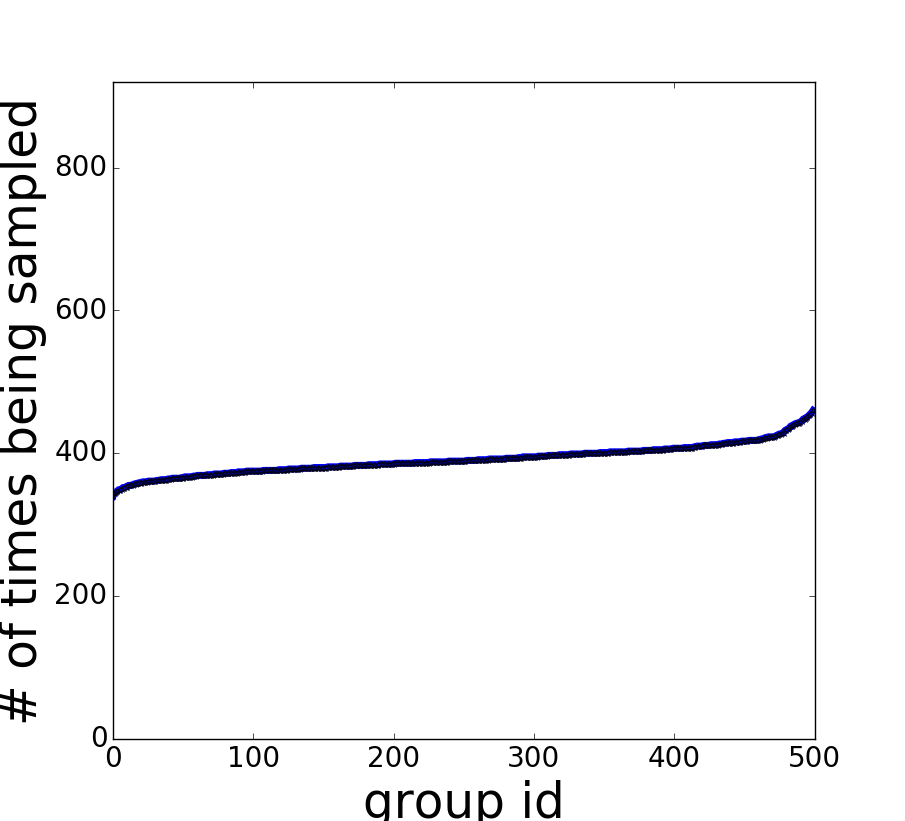}
     \caption{\randfive\ dataset. $\nruns=200,000$}
     \label{fig:rand-5}
\end{figure}

\begin{figure}[t]
     \centering
     \includegraphics[height=0.17\textwidth,width=0.22\textwidth]{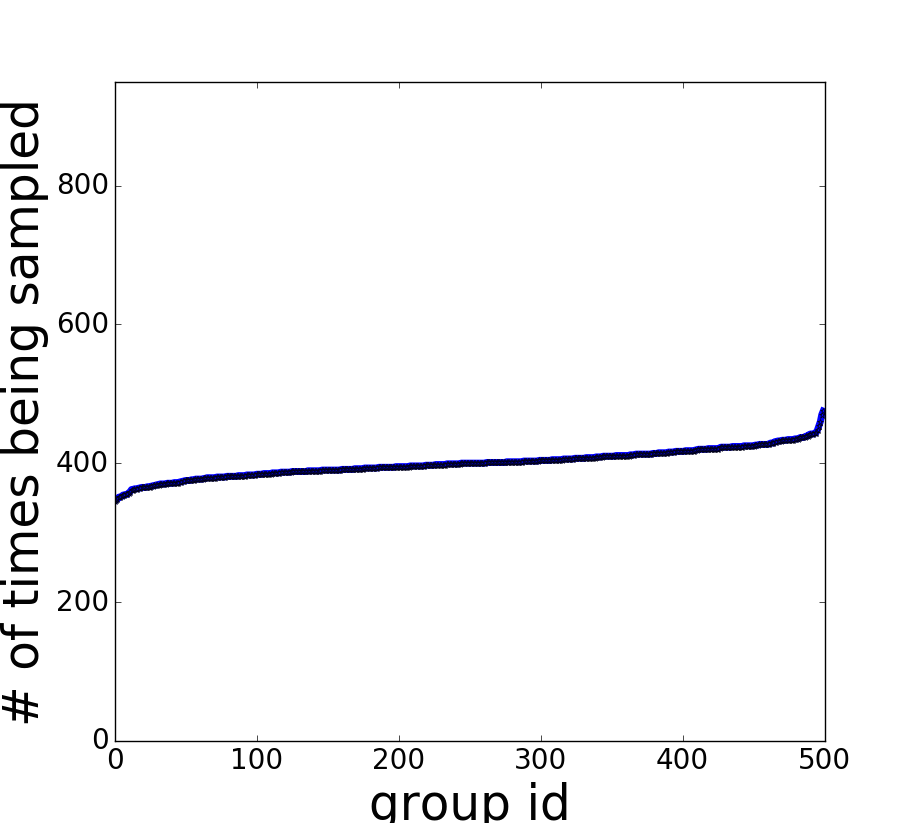}
     \caption{\randtt\ dataset. $\nruns=200,000$}
     \label{fig:randtt}
\end{figure}

\begin{algorithm}[t]
  \DontPrintSemicolon
  \tcc{$p = (x_1, \ldots, x_d) \in \mathbb{R}^d$;
    $i = 1, \ldots, d$, the depth of the DFS;
    $s$, the square distance of the movement
  }
  \If{$s > \alpha^2$}{
    \tcc{the distance of the movement exceeds $\alpha$}
    \Return\;
  }
  \If{$i > d$}{
    $q \gets (y_1, \ldots, y_d)$\;
    \tcc{Since $q$ is on the boundary, we add $0.01\cdot (q - p)$ to make sure that it moves inside a cell so that $\cell(q')$ is well defined
    }
    $q' \gets q + 0.01\cdot (q - p)$\;
    Emit $\cell(q')$\;
    \Return\;
  }
  \tcc{move $x_i$ to $\lfloor x_i \rfloor$}
  \searchAdj\parbox[t]{.6\linewidth}{
    $(p, i+1, s + (\lfloor x_i \rfloor - x_i)^2,$\\
    $(y_1,\ldots, y_{i-1}, \lfloor x_i \rfloor, \bot, \ldots) )$\;}
  \tcc{no movement}
  \searchAdj\parbox[t]{.6\linewidth}{
    $(p, i+1, s,$\\
    $(y_1,\ldots, y_{i-1},x_i, \bot, \ldots) )$\;}
  \tcc{move $x_i$ to $\lceil x_i \rceil$}
  \searchAdj\parbox[t]{.6\linewidth}{
    $(p, i+1, s + (\lceil x_i \rceil - x_i)^2,$\\
    $(y_1,\ldots, y_{i-1}, \lceil x_i \rceil, \bot, \ldots) )$\;}
\caption{\searchAdj$(p, i, s, (y_1, \ldots, y_{i-1}, \bot, \ldots, \bot))$}
\label{alg:searchAdj}
\end{algorithm}

\begin{algorithm}[t]
  \DontPrintSemicolon
  $q \gets (\bot, \bot, \ldots, \bot) \in \mathbb{R}^d$\;
  \Return all cells emitted by \searchAdj$(p, 1, 0, q)$
\caption{\adj$(p)$}
\label{alg:adj}
\end{algorithm}

\setlength{\belowcaptionskip}{-11pt}

\subsection{Results and Discussions}

We visualize our experimental results in Figure~\ref{fig:rand-5}-\ref{fig:deviation}. Figure \ref{fig:ptime} and Figure \ref{fig:pspace} show the results for time and space respectively, and Figure \ref{fig:deviation} presents the deviations of the empirical sampling distributions.  Figure \ref{fig:rand-5}-\ref{fig:seeds-pl} visualize the empirical sampling distribution of each dataset.

\begin{figure}[h]
     \centering
     \includegraphics[height=0.15\textwidth,width=0.22\textwidth]{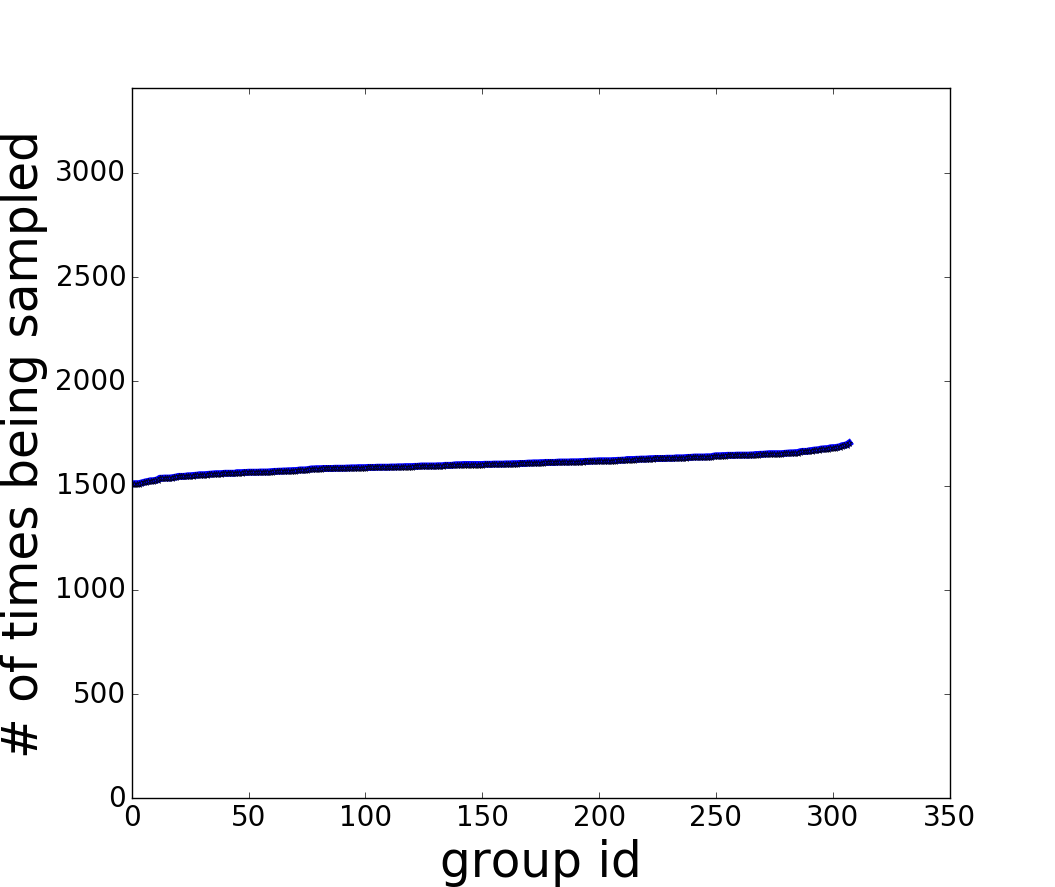}
     \caption{\yacht\ dataset. $\nruns=500,000$}
     \label{fig:yacht}
\end{figure}

\begin{figure}[h]
     \centering
     \includegraphics[height=0.15\textwidth,width=0.22\textwidth]{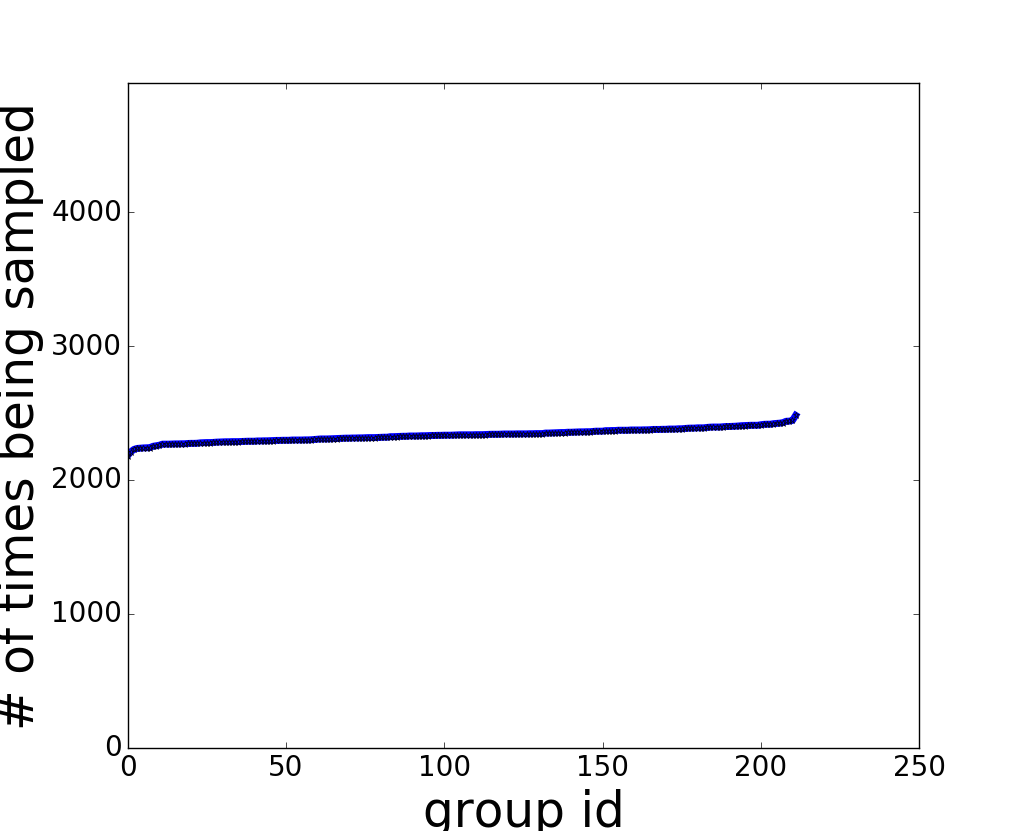}
     \caption{\seeds\ dataset. $\nruns=500,000$}
     \label{fig:seeds}
\end{figure}

\begin{figure}[h]
     \centering
     \includegraphics[height=0.15\textwidth,width=0.22\textwidth]{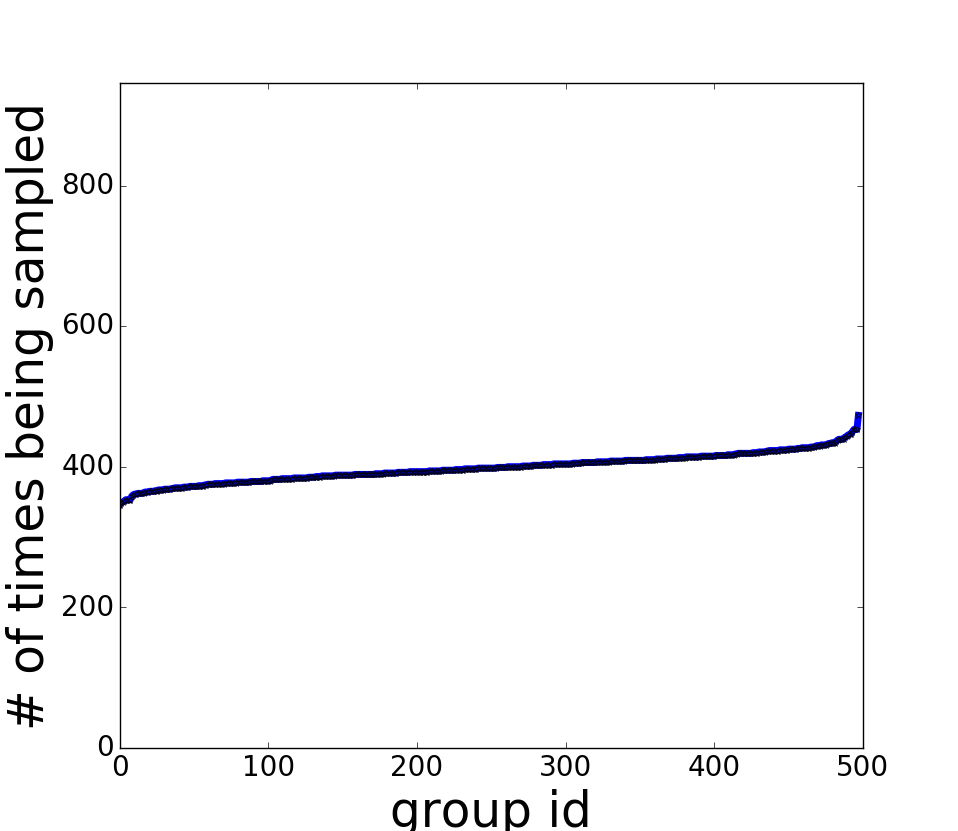}
     \caption{\randfivepl\ dataset. $\nruns=200,000$}
     \label{fig:rand-5-pl}
\end{figure}

\begin{figure}[h]
     \centering
     \includegraphics[height=0.15\textwidth,width=0.22\textwidth]{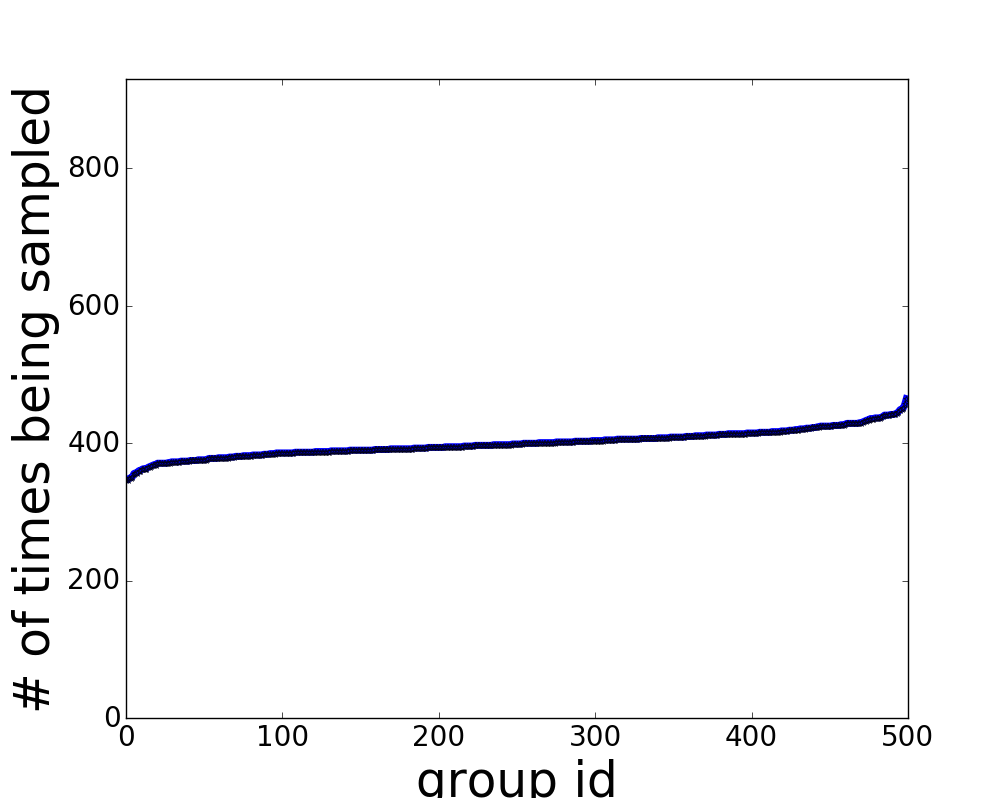}
     \caption{\randttpl\ dataset. $\nruns=200,000$}
     \label{fig:randtt-pl}
\end{figure}

\begin{figure}[h]
     \centering
     \includegraphics[height=0.15\textwidth,width=0.22\textwidth]{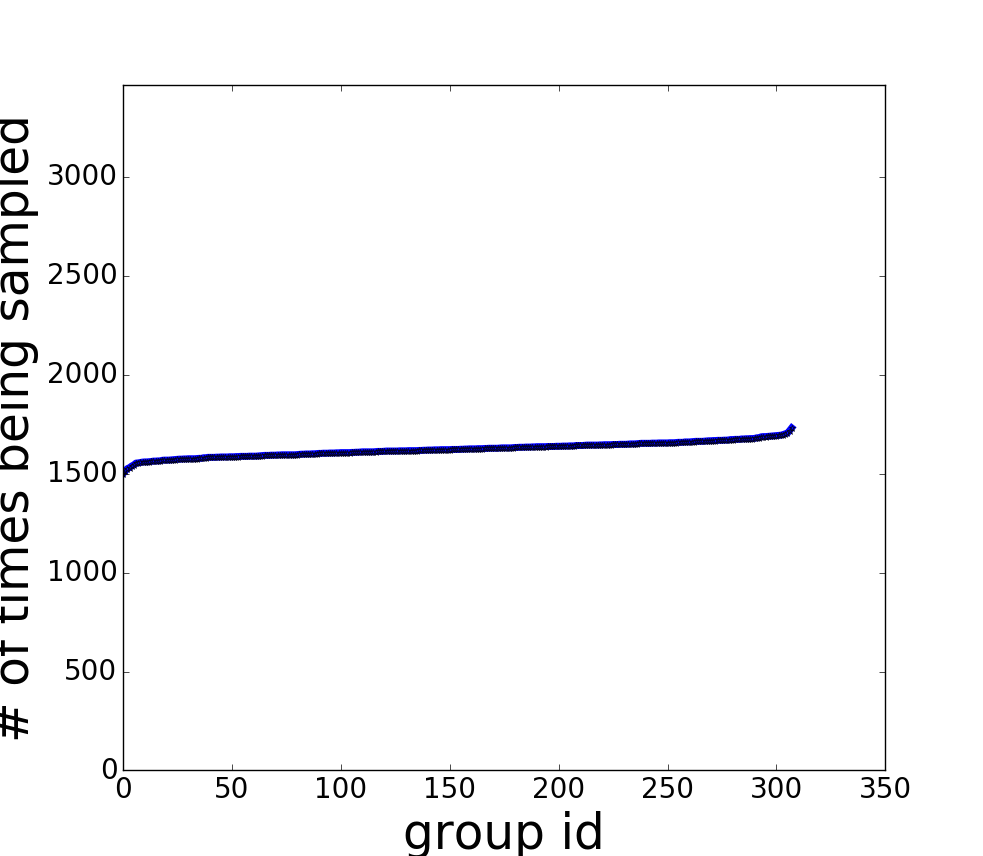}
     \caption{\yachtpl\ dataset. $\nruns=500,000$}
     \label{fig:yacht-pl}
\end{figure}

\begin{figure}[h]
     \centering
     \includegraphics[height=0.15\textwidth,width=0.22\textwidth]{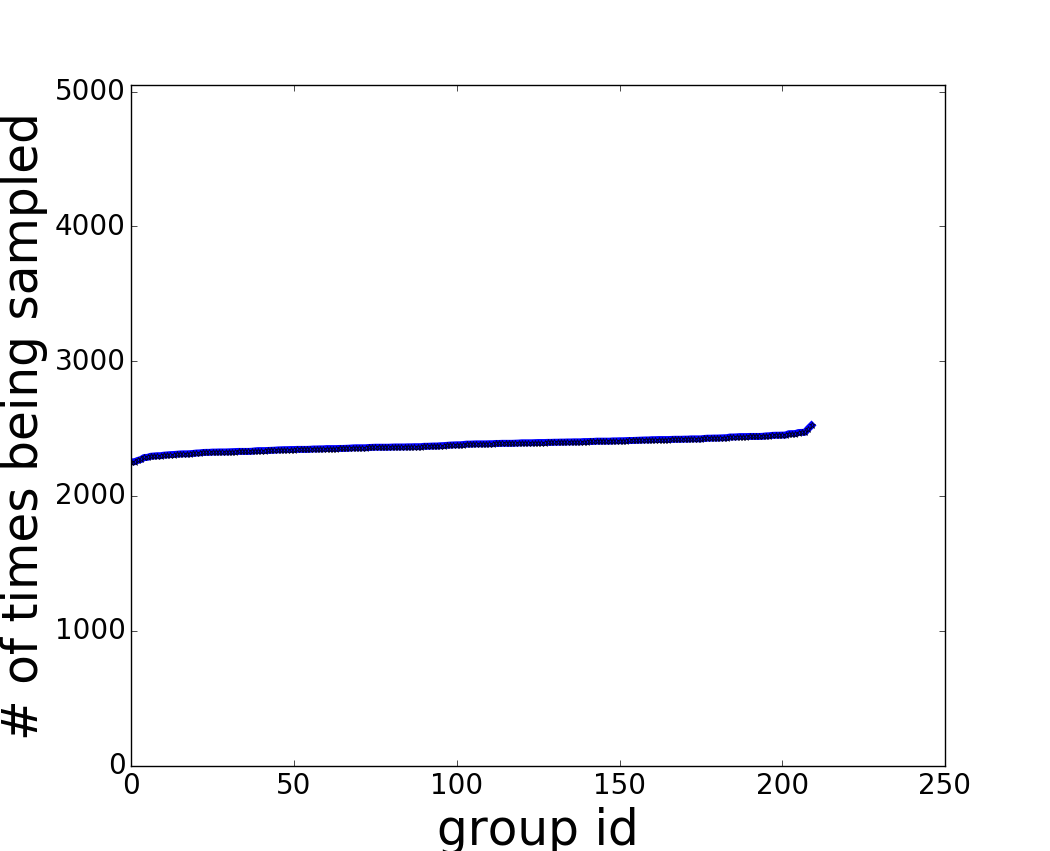}
     \caption{\seedspl\ dataset. $\nruns=500,000$}
     \label{fig:seeds-pl}
\end{figure}

We now briefly discuss these results in words.

\paragraph{Accuracy}
From Figure~\ref{fig:rand-5}-\ref{fig:seeds-pl} we can see that the empirical sampling distributions of our algorithm are very close to the uniform distribution. This can be further supported by the results presented in Figure~\ref{fig:deviation} where in all datasets, \stdDevNm\ is no larger than $0.1$ and \maxDevNm\ is no larger than $0.2$.


\paragraph{Running Time}
From Figure~\ref{fig:ptime} we can observe that Algorithm \ref{alg:IW} runs very fast. The processing time per item is only $1\sim3.5\times 10^{-5}$ second using single thread. 

By comparing the results for datasets \randfive, \randtt, \randfivepl\ and \randttpl, we observe that the running time increases when the dimension $d$ increases. This is due to the fact that manipulating vectors takes more time when $d$ increases.

\paragraph{Space Usage}
Figure~\ref{fig:pspace} demonstrates the space usage of our algorithms on different datasets. We observe that our algorithm is very space-efficient and the dimension of the data points will typically affect the space usage. 




\begin{figure}[h]
  \centering
  \includegraphics[width=0.36\textwidth]{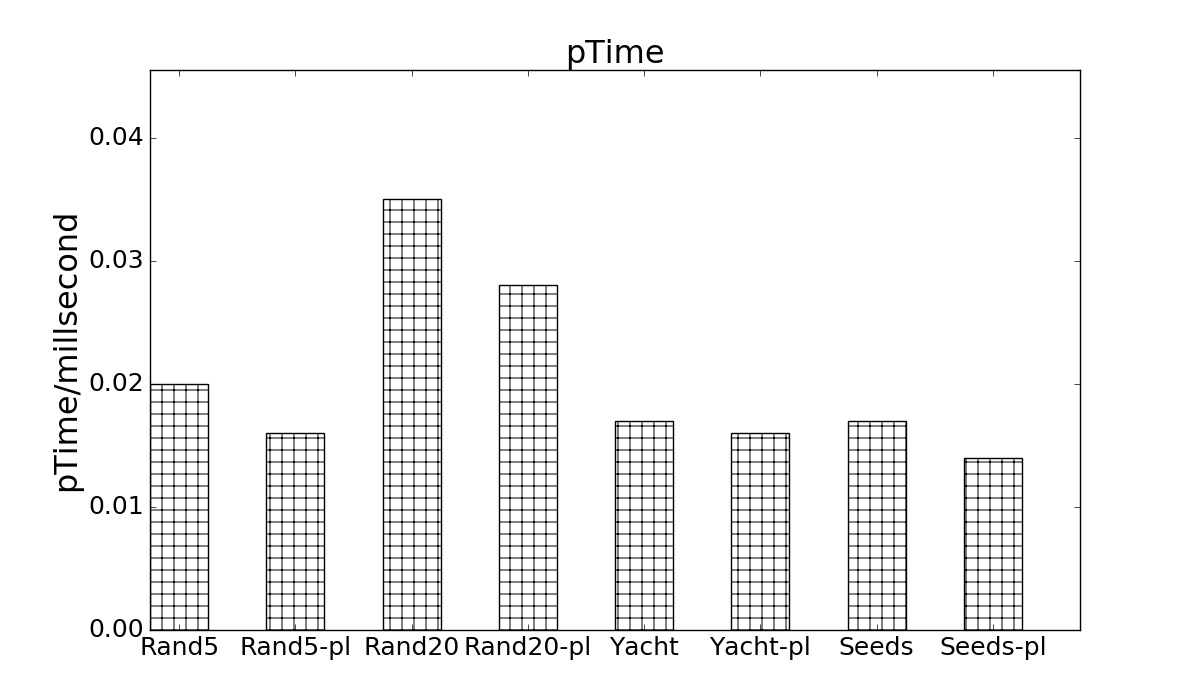}
  \caption{\pTime}
  \label{fig:ptime}
\end{figure}
\setlength{\belowcaptionskip}{-11pt}

\begin{figure}[h]
  \centering
  \includegraphics[width=0.36\textwidth]{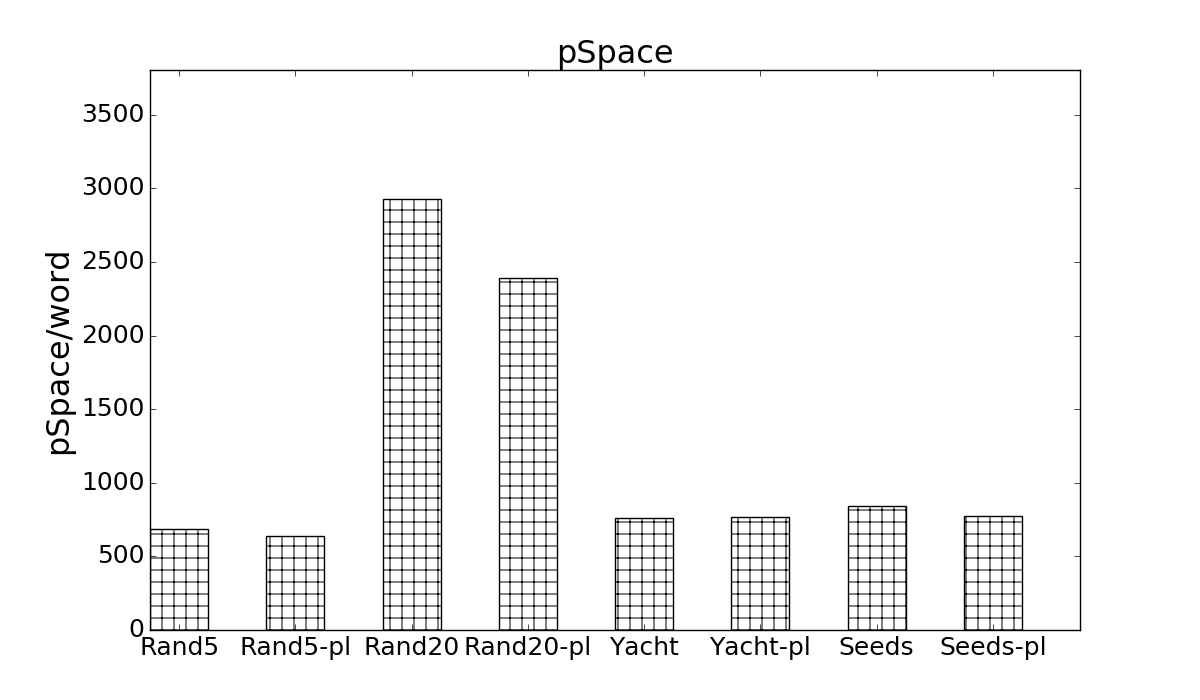}
  \caption{\pSpace}
  \label{fig:pspace}
\end{figure}

\setlength{\belowcaptionskip}{-11pt}

\begin{figure}[h]
  \centering
  \includegraphics[height=0.22\textwidth]{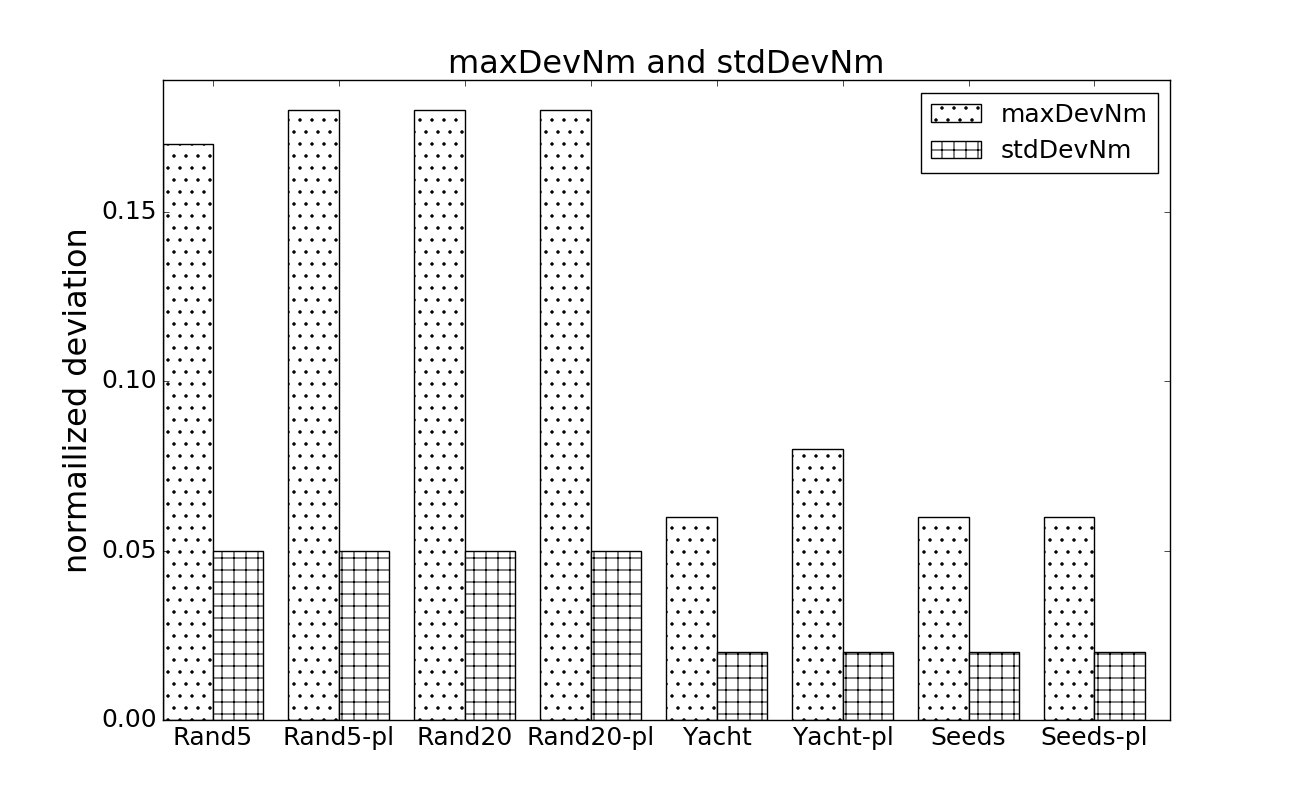}
  \caption{\maxDevNm\ and \stdDevNm}
  \label{fig:deviation}
\end{figure}

\section{Concluding Remarks}
In this paper we study how to perform distinct sampling in the noisy data stream setting, where items may have near-duplicates and we would like to treat all the near-duplicates as the same element.  We have proposed algorithms for both infinite window and sliding windows cases.  The space and time usages of our algorithms only poly-logarithmically depend on the length of the stream.  Our extensive experiments have demonstrated the effectiveness and the efficiency of our distinct sampling algorithms. 

As observed in \cite{CZ16}, the random grid we have used for dealing with data points in the Euclidean space is a particular locality-sensitive hash function,\footnote{See for example \url{https://en.wikipedia.org/wiki/Locality-sensitive_hashing}.} and it is possible to generalize our algorithms to general metric spaces that are equipped with efficient locality-sensitive hash functions. We leave this generalization as a future work.

\bibliographystyle{abbrv}
\bibliography{paper}

\end{document}